\newtheorem{theorem}{Theorem}[]
\newtheorem{lemma}[theorem]{Lemma}
\newtheorem{corollary}[theorem]{Corollary}
\newtheorem{observation}[theorem]{Observation}
\newcommand{\myspan}{{\rm span}}
\newcommand{\ceil}[1]{\ensuremath{\left\lceil{#1}\right\rceil}}%
\newcommand{\ignore}[1]{}%
\newcommand{\ProblemFormat}[1]{{\sc #1}}
\newcommand{\ProblemName}[1]{\ProblemFormat{#1}\xspace}
\newcommand{\TCol}{\ProblemName{Generalized $T$-Coloring}}
\newcommand{\OrgTCol}{\ProblemName{$T$-Coloring}}
\newcommand{\GenLTCol}{\ProblemName{Generalized List $T$-coloring}}
\newcommand{\CA}{\ProblemName{Channel Assignment}}
\newcommand{\SetCover}{\ProblemName{SetCover}}
\newcommand{\ThreeSAT}{\ProblemName{3-CNF-SAT}}
\newcommand{\DomSet}{\ProblemName{Dominating Set}}
\newcommand{\heading}[1]{\medskip\noindent{\bf #1.\ }}%
\newcommand{\Aa}{{\ensuremath{\mathcal{A}}}}
\newcommand{\Cc}{{\ensuremath{\mathcal{C}}}}
\newcommand{\Dd}{{\ensuremath{\mathcal{D}}}}
\newcommand{\Ss}{{\ensuremath{\mathcal{S}}}}
\newcommand{\Ww}{{\ensuremath{\mathcal{W}}}}
\begin{document}
\title{Assigning channels\\ via the meet-in-the-middle approach\thanks{Research supported by National Science Centre of Poland, grant number UMO-2013/09/B/ST6/03136. An extended abstract was presented at 14th Scandinavian Symposium and Workshops on Algorithm Theory (SWAT 2014) in Copenhagen, Denmark.}}
\author{
  \L{}ukasz Kowalik\thanks{Institute of Informatics, University of Warsaw, Poland; email: \texttt{kowalik@mimuw.edu.pl}. }
  \and
  Arkadiusz Soca{\l}a\thanks{Faculty of Mathematics, Informatics and Mechanics, University of Warsaw, Poland; email: \texttt{arkadiusz.socala@students.mimuw.edu.pl}.}
}
\date{}

\maketitle

\begin{abstract}
We study the complexity of the \CA problem. 
By applying the meet-in-the-middle approach we get an algorithm for the $\ell$-bounded \CA (when the edge weights are bounded by $\ell$) running in time $O^*((2\sqrt{\ell+1})^n)$. This is the first algorithm which breaks the $(O(\ell))^n$ barrier.
We extend this algorithm to the counting variant, at the cost of slightly higher polynomial factor. 

A major open problem asks whether \CA admits a $O(c^n)$-time algorithm, for a constant $c$ independent of $\ell$.
We consider a similar question for \TCol, a CSP problem that generalizes \CA.
We show that \TCol does not admit a $2^{2^{o\left(\sqrt{n}\right)}} {\rm poly}(r)$-time algorithm, where $r$ is the size of the instance.
\end{abstract}

\section{Introduction}

In the \CA problem, we are given a symmetric weight function $w:V^2\rightarrow \mathbb{N}$ (we assume that $0\in\mathbb{N}$).
The elements of $V$ will be called vertices (as $w$ induces a graph on the vertex set $V$ with edges corresponding to positive values of $w$).
We say that $w$ is $\ell$-bounded when for every $x,y\in V$ we have $w(x,y)\le \ell$. 
An assignment $c:V\rightarrow\{1,\ldots,s\}$ is called {\em proper} when for each pair of vertices $x,y$ we have $|c(x)-c(y)|\ge w(x,y)$.
The number $s$ is called the {\em span} of $c$. The goal is to find a proper assignment of minimum span. 
Note that the special case when $w$ is $1$-bounded corresponds to the classical graph coloring problem.
It is therefore natural to associate the instance of the channel assignment problem with an edge-weighted graph $G=(V,E)$ where $E=\{uv\ :\ w(u,v)>0\}$ with edge weights $w_E:E\rightarrow \mathbb{N}$ such that $w_E(xy)=w(x,y)$ for every $xy\in E$ (in what follows we abuse the notation slightly and use the same letter $w$ for both the function defined on $V^2$ and $E$).
The minimum span is called also the span of $(G,w)$ and denoted by $\myspan(G,w)$.

It is interesting to realize the place \CA in a kind of hierarchy of constraint satisfaction problems. We have already seen that it is a generalization of the classical graph coloring.
It is also a special case of the constraint satisfaction problem (CSP). 
In CSP, we are given a vertex set $V$, a constraint set $\mathcal{C}$ and a number of colors $d$. 
Each constraint is a set of pairs of the form $(v, t)$ where $v\in V$ and $t\in \{1,\ldots,d\}$.
An assignment $c:V\rightarrow\{1,\ldots,d\}$ is {\em proper} if every constraint $A\in\mathcal{C}$ is satisfied, i.e.\ there exists $(v,t)\in A$ such that $c(v)\ne t$. 
The goal is to determine whether there is a proper assignment.
Note that \CA corresponds to CSP where $d=s$ and every edge $uv$ of weight $w(uv)$ in the instance of \CA corresponds to the set of constraints of the form $\{(u,t_1),(v,t_2)\}$ where $|t_1-t_2|<w(uv)$.

Since graph coloring is solvable in time $O^*(2^n)$~\cite{bhk:coloring} it is natural to ask  whether \CA is solvable in time $O^*(c^n)$, for some constant $c$. 
Unfortunately, the answer is unknown at the moment and the best algorithm known so far runs in $O^*(n!)$ time (see McDiarmid~\cite{mcdiarmid}). 
However, there has been some progress on the $\ell$-bounded variant.
McDiarmid~\cite{mcdiarmid} came up with an $O^*((2\ell+1)^n)$-time algorithm which has been next improved by Kral~\cite{kral} to $O^*((\ell+2)^n)$ and to $O^*((\ell+1)^n)$ by Cygan and Kowalik~\cite{cygan}. These are all dynamic programming (and hence exponential space) algorithms, and the last one applies the fast zeta transform to get a minor speed-up. 
Interestingly, all these works show also algorithms which {\em count} all proper assignments of span at most $s$ within the same running time (up to polynomial factors) as the decision algorithm.

It is a major open problem (see~\cite{kral,cygan,dagstuhl}) to find such a $O(c^n)$-time algorithm for $c$ independent of $\ell$ or prove that it does not exist under a reasonable complexity assumption. 
A complexity assumption commonly used in such cases is the Exponential Time Hypothesis (ETH), introduced by Impagliazzo and Paturi~\cite{IP01}.
It states that \ThreeSAT cannot be computed in time $2^{o(n)}$, where $n$ is the number of variables in the input formula.
The open problem mentioned above becomes even more interesting when we realize that under ETH, CSP does not have a $O^*(c^n)$-time algorithm for a constant $c$ independent of $d$, as proved by Traxler~\cite{traxler}.

\heading{Our Results}
Our main result is a new $O^*((2\sqrt{\ell+1})^n)$-time algorithm for the $\ell$-bounded \CA problem. Note that this is the first algorithm which breaks the $(O(\ell))^n$ barrier.
Our algorithm follows the meet-in-the-middle approach (see e.g. Horowitz and Sahni~\cite{HS-JACM74}) and is surprisingly simple, so we hope it can become a yet another clean illustration of this beautiful technique. We show also its (more technical) counting version, which runs within the same time (up to a polynomial factor).

Although we were not able to show that the unrestricted \CA does not admit a $O(c^n)$-time for a constant $c$ under, say ETH, we were able to shed some more light at this issue.
Let us consider some more problems in the CSP hierarchy.
In the \OrgTCol, introduced by Hale~\cite{Hale80}, we are given a graph $G=(V,E)$, a set $T\subseteq\mathbb{N}$, and a number $s\in\mathbb{N}$. 
An assignment $c:V\rightarrow\{1,\ldots,s\}$ is proper when for every edge $uv\in E$ we have $|c(u)-c(v)|\not\in T$. 
As usual, the goal is to determine whether there exists a proper assignment.
Like \CA, \OrgTCol is a special case of CSP and generalizes graph coloring, but it is incomparable with \CA. 
However, Fiala, Kr\'{a}l' and \v{S}krekovski introduced  which is a common generalization of vertex list-coloring (a variant of the classical graph coloring where each vertex has a list, i.e., a set of allowed colors), \CA and \OrgTCol.
The instance of the \GenLTCol is a triple $(G,\Lambda,t,s)$ where $G=(V,E)$ is a graph, $\Lambda : V\rightarrow 2^{\mathbb{N}}$, $t : E \rightarrow 2^{\mathbb{N}}$ and $s\in\mathbb{N}$, where $\mathbb{N}$ denotes the set of all nonnegative integers. 
An assignment $c:V\rightarrow\{1,\ldots,s\}$ is proper when for every $v\in V$ we have $c(v)\in\Lambda(v)$, and for every edge $uv\in E$ we have $|c(u)-c(v)|\not\in t(uv)$. 
As usual, the goal is to determine whether there exists a proper assignment.
Similarly as in the case of \CA, we say that the instance of \GenLTCol is $\ell$-bounded if 
$\max \bigcup_{e\in E} t(e)\le \ell$.
Very recently, the \GenLTCol was considered by Junosza-Szaniawski and Rzążewski~\cite{junosza-tcol}.  They show \GenLTCol can be solved in $O^*((\ell+2)^n)$ time, which matches the time complexity of the algorithm of Cygan and Kowalik~\cite{cygan} for \CA (note that an $\ell$-bounded instance of \CA can be seen as an $(\ell-1)$-bounded instance of \GenLTCol).
In this work we show that most likely one cannot hope for am $O^*(c^n)$-time algorithm for \GenLTCol.
We even consider a special case of \GenLTCol, i.e.\ the non-list version where every vertex is allowed to have any color, so the instance is just a triple $(G,t,s)$. We call it \TCol. We show that, under ETH, \TCol does not admit a $2^{2^{o\left(\sqrt{n}\right)}} {\rm poly}(r)$-time algorithm, where $r$ is the size of the instance (including all the bits needed to represent the sets $t(e)$ for all $e\in E$). Note that this rules out an $O(n!)$ algorithm as well.

\heading{Organization of the paper}
In Section~\ref{sec:dp} we describe an $O^*((\ell+2)^n)$-time dynamic programming algorithm for $\ell$-bounded \CA. It is then used as a subroutine in the $O^*((2\sqrt{\ell+1})^n)$-time algorithm described in Section~\ref{sec:mim}. In Section~\ref{sec:count} we extend the algorithm from Section~\ref{sec:mim} to counting proper assignments of given span.
Finally, in Section~\ref{sec:hardness} we discuss hardness of \TCol under ETH. 

\heading{Notation}
Throughout the paper $n$ denotes the number of the vertices of the graph under consideration.
For an integer $k$, by $[k]$ we denote the set $\{1, 2, \ldots, k\}$.
Finally, $\uplus$ is the disjoint sum of sets i.e. the standard sum of sets $\cup$
but with an additional assumption that the sets are disjoint.

\section{Yet another $O^*((\ell+2)^n)$-time dynamic programming}
\label{sec:dp}

In this section we provide a $O^*((\ell+2)^n)$-time dynamic programming algorithm for \CA.
It uses a different approach than e.g.\ the algorithm of Kral, and will be used as a subroutine in our faster algorithm.

For a subset $X\subseteq V$ and a function $f:X\rightarrow [\ell+1]$ let $\mathcal{A}_{X,f}$ be the set of all proper assignments $c:X\rightarrow\mathbb{N}$ of the graph $G[X]$ subject to the condition that for every $x\in X$ we have $c(x)\ge f(x)$. 

For every subset $X\subseteq V$ and $f:X\rightarrow [\ell+1]$ we compute the value of $T[X,f]$ which is equal to the minimum span of an assignment from $\mathcal{A}_{X,f}$. 
Clearly, the minimum span of $(G,w)$ equals to $T[V,f_1]$ where $f_1$ is the constant function which assigns $1$ to every vertex. 

The values of $T[X,f]$ are computed by dynamic programming as follows.
First we initialize $T[\emptyset, e_{[\ell+1]}]=0$ (where $e_{[\ell+1]}$ is the only function $f:\emptyset\rightarrow [\ell+1]$).
Next, we iterate over all non-empty subsets of $V$ in the order of nondecreasing cardinality.
In order to determine the value of $T[X,f]$ we use the recurrence relation formulated in the following lemma.

Informally, it uses the observation that there is a minimum-span assignment $c$ such that the vertex $v\in X$ with minimum color $c(v)$ is {\em left-shifted}, i.e.\ $c(v)=f(v)$. Hence we can check all possibilities for $v$ and then the colors of all the other vertices from $X$ have lower bounds in range $\{f(v),\ldots,f(v)+\ell\}$, so we can translate the range back down to $\{1,\ldots,\ell+1\}$ and use the previously computed values of $T[X\setminus\{v\},\cdot]$.

\begin{lemma}
For a subset $X\subseteq V$, a function $f:X\rightarrow [\ell+1]$ and a vertex $v$ define the function $f_v:X\setminus\{v\}\rightarrow [\ell+1]$ given by the formula \[f_v(x)=1+\max\{w(v,x),f(x)-f(v)\} \text{\quad\quad for every $x\in X\setminus\{v\}$}.\] 
Then,
\begin{equation}
\label{eq:dp}
T[X,f] = \min_{v\in X}(f(v) + T[X\setminus\{v\},f_v] - 1),
\end{equation}
\end{lemma}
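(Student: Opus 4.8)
The plan is to establish the recurrence~\eqref{eq:dp} by proving the two opposite inequalities, both resting on the translation invariance of properness and on the \emph{left-shift} observation sketched before the statement.

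For the inequality $T[X,f]\le \min_{v\in X}(f(v)+T[X\setminus\{v\},f_v]-1)$, I would fix an arbitrary $v\in X$ and take an optimal assignment $c''\in\mathcal{A}_{X\setminus\{v\},f_v}$ of span $T[X\setminus\{v\},f_v]$. Define $c:X\to\mathbb{N}$ by $c(v)=f(v)$ and $c(x)=c''(x)+f(v)-1$ for $x\in X\setminus\{v\}$. I would then check that $c\in\mathcal{A}_{X,f}$: properness on edges inside $X\setminus\{v\}$ is immediate, since there $c$ is a translate of $c''$; for an edge $vx$ the definition of $f_v$ gives $c''(x)\ge f_v(x)\ge 1+w(v,x)$, so $|c(v)-c(x)|=c''(x)-1\ge w(v,x)$; and the lower bounds $c(x)\ge f(x)$ follow from $f_v(x)\ge 1+f(x)-f(v)$. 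Since $c''(x)\ge 1$, the span of $c$ equals $f(v)-1+\myspan(c'')$, which yields the claimed bound for this $v$, and taking the minimum over $v$ finishes the direction.

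For the reverse inequality, I would start from an optimal $c\in\mathcal{A}_{X,f}$ realizing $T[X,f]$ and let $v$ be a vertex of minimum color. The key preprocessing step is the left-shift: if $c(v)>f(v)$, lowering $c(v)$ to $f(v)$ keeps $c$ proper (decreasing the globally smallest color only enlarges every gap $|c(v)-c(x)|$) and does not increase the span, so I may assume $c(v)=f(v)$. I would then set $c''(x)=c(x)-f(v)+1$ for $x\in X\setminus\{v\}$ and verify $c''\in\mathcal{A}_{X\setminus\{v\},f_v}$. The two lower-bound conditions packaged into $f_v$ appear exactly here: $c(x)\ge f(x)$ is inherited, while minimality of $c(v)$ together with properness of the edge $vx$ gives $c(x)-f(v)=c(x)-c(v)\ge w(v,x)$; combining the two yields $c''(x)\ge f_v(x)$. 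Because $\myspan(c'')=T[X,f]-f(v)+1$, we obtain $f(v)+T[X\setminus\{v\},f_v]-1\le T[X,f]$, and hence the minimum over $v$ is at most $T[X,f]$.

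I expect the main obstacle to be the careful bookkeeping around the shift by $f(v)-1$ and the verification that the single quantity $f_v(x)=1+\max\{w(v,x),f(x)-f(v)\}$ simultaneously encodes both the original lower bound $f(x)$ and the separation forced by the edge to $v$; getting the additive constants to match in both directions is the delicate part. A secondary point worth stating explicitly is the degenerate case $X=\{v\}$, where $X\setminus\{v\}=\emptyset$ and one must reconcile the span of the empty assignment with the base value $T[\emptyset,\cdot]$ so that the recurrence reproduces $T[\{v\},f]$ correctly.
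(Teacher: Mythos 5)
Your proof is correct and follows essentially the same route as the paper's: both directions rest on left-shifting the minimum-color vertex to $f(v)$ and on the translation by $f(v)-1$ that identifies assignments in $\mathcal{A}_{X\setminus\{v\},f_v}$ with assignments in $\mathcal{A}_{X,f}$ whose minimum color is attained at $v$ with value $f(v)$. If anything, your explicit two-inequality organization is slightly more careful than the paper's (which argues the upper bound only for vertices $v$ with $f(v)=\min_{x\in X}f(x)$ and leaves the remaining terms of the minimum unexamined), and your remark about the degenerate case $X=\{v\}$ is well taken, since with the initialization $T[\emptyset,\cdot]=0$ the recurrence would return $f(v)-1$ rather than the true minimum span $f(v)$.
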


\begin{proof}
Fix $v\in X$.
Denote $\Aa_{X,f,v} = \{c \in \Aa_{X,f}\ :\ c(v)=f(v)=\min_{x\in X} f(x)\}$.
Then, for every assignment $c\in\Aa_{X,f,v}$, for every $x\in X\setminus\{v\}$ we have $c(x)\ge f(v)+\max\{w(v,x),f(x)-f(v)\}$.
Hence, the minimum span of an assignment from $\Aa_{X,f,v}$ is equal to $f(v) + T[X\setminus\{v\},f_v]-1$.
It suffices to show that there is an assignment $c^*\in\mathcal{A}_{X,f}$ of minimum span such that $c^*(v)\in A_{X,f,v}$ for some $v\in X$. 
Consider an arbitrary assignment $c^*\in\mathcal{A}_{X,f}$ of minimum span. Let $x\in X$ be the vertex of minimum color, i.e.\ $c^*(x)$ is minimum.
If $c^*(x)=f(x)$ we are done. Otherwise consider a new assignment $c^{**}$ which is the same as $c^*$ everywhere except for $x$ and $c^{**}(x)=f(x)$; then $c^{**}$ is proper since $c^*(x)$ is minimal and clearly $c^{**}\in\mathcal{A}_{X,f}$. 
The span of $c^{**}$ is not greater than the span of $c^*$ (actually they are the same since $c^*$ has minimal span), so the claim follows. 
\end{proof}

The size of the array $T$ is $\sum_{i=0}^n{n\choose i}(\ell+1)^i=(\ell+2)^n$. Computing a single value based on previously computed values for smaller sets takes $O(n^2)$ time, hence the total computation takes $O((\ell+2)^nn^2)$ time. As described, it gives the minimum span only, but we can retrieve the corresponding assignment within the same running time using standard techniques.

\section{The meet-in-the-middle speed-up}
\label{sec:mim}

In this section we present our main result, an algorithm for $\ell$-bounded \CA that applies the meet-in-the-middle technique. Roughly, the idea is to find partial solutions for all possible {\em halves} of the vertex set and then merge the partial solutions efficiently to solve the full instance. 

For the clarity of the presentation we assume $n$ is even (otherwise we just add a dummy isolated vertex). 
Before we describe the algorithm let us introduce some notation.
For a set $X\subseteq V$, by $\overline{X}$ we denote $V\setminus X$. Moreover, for a function $f:X\rightarrow[\ell+1]$ we define function $\overline{f}:\overline{X}\rightarrow[\ell+1]$ such that for every $v\in\overline{X}$,
\[\overline{f}(v) = 1 + \max(\{1 + w(uv)-f(u)\ :\ uv\in E,\ u\in X\}\cup\{0\}).\]
The values $T[X,f]$ are defined as in Section~\ref{sec:dp}. Our algorithm is based on the following observation.

\begin{lemma}
\label{lem:mim}
The span of $(G,w)$ is equal to
\[\min (T[X,f] + T[\overline{X},\overline{f}] - 1),\]
where the minimum is over all pairs $(X,f)$ where $X\in{V\choose n/2}$ and $f:X\rightarrow[\ell+1]$.
\end{lemma}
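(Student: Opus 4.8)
The plan is to prove the two inequalities separately. Throughout write $\myspan(G,w)$ for the left-hand side and, for a pair $(X,f)$, abbreviate $R(X,f)=T[X,f]+T[\overline X,\overline f]-1$ for the expression being minimized.

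\textbf{The construction direction} ($\myspan(G,w)\le \min R(X,f)$). For an arbitrary pair $(X,f)$ I will build a proper assignment of $(G,w)$ of span at most $R(X,f)$; this proves the inequality for every pair and hence for the minimum. Let $c_1$ realize $T[X,f]$ in $\Aa_{X,f}$ and $c_2$ realize $T[\overline X,\overline f]$ in $\Aa_{\overline X,\overline f}$. The key idea is to \emph{reflect} $c_1$ inside its own color range, replacing each color by (top of range) $+1-$ itself. Reflection preserves all absolute differences, so the reflected map is still proper on $G[X]$, and, crucially, the lower bounds $c_1(u)\ge f(u)$ turn into upper bounds: each $u\in X$ now sits at least $f(u)-1$ below the top of the $X$-block. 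I then stack the two blocks, keeping the reflected $c_1$ at the bottom and placing $c_2$ above it, shifted up so the two ranges overlap in a single color. Edges inside $X$ and inside $\overline X$ stay satisfied, since reflection and translation both preserve differences. For a cut edge $uv$ with $u\in X$, $v\in\overline X$, the vertex $v$ lies above $u$, and the required gap $c(v)-c(u)\ge w(uv)$ unwinds, after substituting the block shift and the reflected position of $u$, to exactly the inequality $\overline f(v)\ge 1+w(uv)-f(u)$ (up to the additive constant absorbed by the overlap) that is built into the definition of $\overline f$. Reading off the top color of the stacked assignment gives span $\le R(X,f)$.

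\textbf{The decomposition direction} ($\myspan(G,w)\ge \min R(X,f)$). Here it suffices to exhibit one pair $(X,f)$ with $R(X,f)\le \myspan(G,w)$. Take an optimal assignment $c^*$ of span $s^*$, let $X$ consist of the $n/2$ vertices of smallest color (ties broken arbitrarily), and set $m=\max_{u\in X}c^*(u)$, so that $c^*(u)\le m\le c^*(v)$ for all $u\in X$, $v\in\overline X$. Define the capped reflection $f(u)=\min(\ell+1,\;m+1-c^*(u))\in[\ell+1]$. The reflected restriction $\tilde c_1(u)=m+1-c^*(u)$ is proper on $G[X]$, lies in $\{1,\dots,m\}$ and satisfies $\tilde c_1(u)\ge f(u)$, so $\tilde c_1\in\Aa_{X,f}$ and bounds $T[X,f]$ by (about) $m$. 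The down-shifted restriction $c_2(v)=c^*(v)-(m-1)$ is proper on $G[\overline X]$ and, as checked below, lies in $\Aa_{\overline X,\overline f}$, bounding $T[\overline X,\overline f]$ by (about) $s^*-m$; summing the two bounds yields $R(X,f)\le s^*$.

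\textbf{The main obstacle.} The delicate step is the membership $c_2\in\Aa_{\overline X,\overline f}$, i.e.\ verifying $c_2(v)\ge\overline f(v)$, and it splits according to the cap. When $f(u)=m+1-c^*(u)$ is uncapped, the inequality unwinds to $c^*(v)-c^*(u)\ge w(uv)$, which is just properness of $c^*$ together with $c^*(v)\ge c^*(u)$. When $f(u)=\ell+1$ is capped — meaning $u$ has a color far below $m$ — the relevant term of $\overline f(v)$ collapses to something at most $m$ \emph{precisely because} $w(uv)\le\ell$, and $c^*(v)\ge m$ closes the gap. This capped case is exactly where $\ell$-boundedness enters, and I expect it to be the genuine obstacle. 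The only remaining difficulty is the off-by-one bookkeeping between ``largest color used'' and the value stored in $T$ (recall $T[\emptyset,\cdot]=0$), which must be tracked consistently through the reflection and the two block shifts so that the additive constants match the $-1$ in the statement.
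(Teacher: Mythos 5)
Your proposal is correct and follows essentially the same route as the paper's proof: the construction direction reflects $c_1$ and stacks a shifted $c_2$ on top with a one-color overlap exactly as in the paper, and the decomposition direction splits an optimal $c^*$ at the median color and uses the capped reflection $f(x)=1+\min\{s_1-c^*(x),\ell\}$, just as the paper does. Your explicit capped/uncapped case split for verifying $c^*(v)\ge s_1-1+\overline{f}(v)$ is in fact slightly more careful than the paper's one-line chain, but it is the same argument.
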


\begin{proof}
Let $c^*:V\rightarrow\mathbb{N}$ be a proper assignment of minimum span $s$.
Order the vertices of $V=\{v_1,\ldots,v_n\}$ so that for every $i=1,\ldots,n-1$ we have $c^*(v_i)\le c^*(v_{i+1})$.
Consider the subset $X=\{v_1,\ldots,v_{n/2}\}$.
Let $s_1 = c^*(v_{n/2})$.
Define $f:X\rightarrow[\ell+1]$ such that $f(x)= 1 + \min\{s_1-c^*(x),\ell\}$ for every $x\in X$.
From the definition of $T$ we have $T[X,f]\le s_1$ (because the assignment $x\mapsto 1+s_1-c^*(x)$ belongs to $\Aa_{X,f}$ and has span $s_1$).
Moreover, note that for every $v\in\overline{X}$ it holds that 
\begin{equation*}
\begin{split}
c^*(v) & \ge \max(\{c^*(u)+w(uv)\ :\ uv\in E,\ u\in X\}\cup\{s_1\}) \\
       & =   \max(\{s_1 +w(uv) - f(u) + 1\ :\ uv\in E,\ u\in X\}\cup\{s_1\}) \\
       & =  s_1 - 1 + \overline{f}(v).
\end{split}
\end{equation*}
It follows that $s=\max_{v\in\overline{X}}c^*(v) \ge s_1 - 1 + T[\overline{X},\overline{f}] \ge T[X,f] + T[\overline{X},\overline{f}] - 1$.

Finally we show that $s > T[X,f] + T[\overline{X},\overline{f}] - 1$ contradicts the optimality of $c^*$. 
Let $c_1\in\Aa_{X,f}$ be an assignment of span $T[X,f]$ and let $c_2\in\Aa_{\overline{X},\overline{f}}$ be an assignment of span $T[\overline{X},\overline{f}]$. Consider the following assignment $c:V\rightarrow\mathbb{N}$.
\[
c(x) = \begin{cases}
            1 + T[X,f] - c_1(x) & \text{for $x \in X$} \\
            T[X,f] + c_2(x) - 1 & \text{for $x \in \overline{X}$}
           \end{cases} 
\]           
One can check that from the definition of $\overline{f}$ it follows that $c$ is a proper assignment. 
Moreover, the span of $c$ is equal to $T[X,f] + T[\overline{X},\overline{f}] - 1$. Hence, if $s > T[X,f] + T[\overline{X},\overline{f}] - 1$ then $c^*$ is not optimal, a contradiction.
\end{proof}

From Lemma~\ref{lem:mim} we immediately obtain the following algorithm for computing the span of $(G,w)$:

\begin{enumerate}
	\item Compute the values of $T[X,f]$ for all $X\in{V\choose \le n/2}$ and $f: X\rightarrow [\ell+1]$ using the algorithm from Section~\ref{sec:dp}.
	\item Find the span of $(G,w)$ using the formula from Lemma~\ref{lem:mim}.
\end{enumerate}

Note that Step 1 takes time proportional to $\sum_{i=0}^{n/2}{n \choose i}(\ell+1)^in^2=O(2^n(\ell+1)^{n/2}n^2)$. The size of array $T$ is clearly 
$O(2^n(\ell+1)^{n/2})$. In Step 2 we compute a minimum of ${n \choose n/2}(\ell+1)^{n/2}=O(2^n(\ell+1)^{n/2})$ values.
Hence the total time is $O(2^n(\ell+1)^{n/2}n^2)$. 
As described, the above algorithm gives the minimum span only, but we can retrieve the corresponding assignment within the same running time using standard techniques.
We have just proved the following theorem.

\begin{theorem}
For every $\ell$-bounded weight function the channel assignment problem can be solved in $O(2^n(\ell+1)^{n/2}n^2)$ time.
\end{theorem}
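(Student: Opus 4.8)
The plan is to derive the theorem as an essentially immediate consequence of Lemma~\ref{lem:mim} together with the dynamic programming of Section~\ref{sec:dp}; correctness is almost free, so the real work is the running-time count, whose only non-routine point is a partial binomial sum estimate.

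First I would dispose of correctness. By Lemma~\ref{lem:mim} we have $\myspan(G,w)=\min(T[X,f]+T[\overline{X},\overline{f}]-1)$ over all $X\in\binom{V}{n/2}$ and $f:X\to[\ell+1]$. Since both $X$ and $\overline{X}$ have size exactly $n/2$, every table entry appearing in this formula has its first argument of cardinality $n/2$. Hence it suffices that Step~1 correctly fills $T[Y,g]$ for all $Y$ with $|Y|\le n/2$, which is exactly what the recurrence~\eqref{eq:dp} of Section~\ref{sec:dp} achieves, as that recurrence only reaches down to strictly smaller sets. Step~2 then evaluates the minimum of Lemma~\ref{lem:mim} directly: for each pair $(X,f)$ it computes $\overline{f}$ from its definition and reads off the two stored values $T[X,f]$ and $T[\overline{X},\overline{f}]$.

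Next I would bound the time. Step~1 fills $\sum_{i=0}^{n/2}\binom{n}{i}(\ell+1)^i$ entries, spending $O(n^2)$ on each via~\eqref{eq:dp}. The key estimate is that this partial sum is dominated by its top term: the ratio of consecutive summands is $\tfrac{n-i}{i+1}(\ell+1)$, which for every $i<n/2$ and $\ell\ge 1$ exceeds a constant strictly above $1$, so the sum telescopes geometrically to $O\bigl(\binom{n}{n/2}(\ell+1)^{n/2}\bigr)=O\bigl(2^n(\ell+1)^{n/2}\bigr)$ (the degenerate case $\ell=0$ being covered by $\sum_{i\le n/2}\binom{n}{i}\le 2^n$). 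This is precisely where meet-in-the-middle improves on the full table of size $(\ell+2)^n$. Thus Step~1 costs $O(2^n(\ell+1)^{n/2}n^2)$. Step~2 ranges over $\binom{n}{n/2}(\ell+1)^{n/2}=O(2^n(\ell+1)^{n/2})$ pairs, and for each spends $O(n^2)$ computing $\overline{f}$ and performing two look-ups, giving the same bound; summing the two steps yields the claimed $O(2^n(\ell+1)^{n/2}n^2)$.

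Finally, the statement asks only for the span; to output an optimal assignment I would, as usual, store with each $T[X,f]$ the minimising vertex realising~\eqref{eq:dp} and record the minimising pair found in Step~2, then unwind these pointers in the standard dynamic-programming traceback, reconstructing $c$ via the two cases in the proof of Lemma~\ref{lem:mim} without changing the asymptotics. The main obstacle here is mild: the only place that needs genuine care rather than bookkeeping is making the partial-sum domination rigorous so as to absorb the $\Theta(n)$ factor from the number of summands into a constant and recover the advertised $n^2$ polynomial factor.
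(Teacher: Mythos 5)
Your proposal is correct and follows the paper's proof essentially verbatim: precompute $T[X,f]$ for all $|X|\le n/2$ via the recurrence of Section~\ref{sec:dp}, then minimize the expression of Lemma~\ref{lem:mim} over all pairs $(X,f)$ with $|X|=n/2$, with the same per-entry cost of $O(n^2)$. The one point you flag as needing ``genuine care,'' the partial binomial sum, is handled in the paper by the trivial termwise bound $(\ell+1)^i\le(\ell+1)^{n/2}$ for $i\le n/2$, giving $\sum_{i=0}^{n/2}\binom{n}{i}(\ell+1)^i\le 2^n(\ell+1)^{n/2}$ uniformly (including $\ell=0$), so your geometric-ratio argument, while valid, is unnecessary.
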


\section{An Extension to Counting}
\label{sec:count}

In this section we present an extension of our meet-in-the-middle algorithm which finds the number of proper assignments of span $s$.
This is slightly more technical than the decision algorithm because we need to avoid counting the same assignment more than once. We assume here that $V=\{1,\ldots,n\}$ (we will use the fact that $V$ is linearly ordered).

For $X\in {V\choose n/2}$, function $f:X\rightarrow [\ell+1]$ and value $r=1,\ldots,s$ denote the set of all assignments from $\mathcal{A}_{X,f}$ with span $r$ by $\mathcal{A}_{X,f,r}$.
Let us denote $Q[X,f,r]=|\mathcal{A}_{X,f,r}|$. We will use the recurrence relation formulated in the following lemma.

\begin{lemma}
\label{lem:eq-dp2}
For a subset $X\in {V\choose \le n/2}$, a function $f:X\rightarrow [\ell+1]$ and a vertex $v$ define the function $f_v:X\setminus\{v\}\rightarrow [\ell+1]$ given by the formula 
\[f_v(x)=\max\{f(x),1+w(vx),1+[x<v]\} \text{\quad\quad for every $x\in X\setminus\{v\}$}.\] 
Also, for a function $f:X\rightarrow \{2,\ldots,\ell+1\}$ define the function $f_{\downarrow}:X\rightarrow [\ell+1]$ given by the formula 
\[f_{\downarrow}(x)=\max\{f(x)-1,1\} \text{\quad\quad for every $x\in X$}.\] 

Then, for every $X\in {V\choose n/2}$, $f:X\rightarrow [\ell+1]$ and $r=1,\ldots,s$
\begin{equation}
\label{eq:dp2}
Q[X,f,r] = \begin{cases}
            \sum_{v\in f^{-1}(1)}Q[X\setminus\{v\},f_v,r] + [r>1]Q[X,f_{\downarrow},r-1] & \text{if $X\ne\emptyset$} \\
            [r=1] & \text{otherwise}
           \end{cases} 
\end{equation}
\end{lemma}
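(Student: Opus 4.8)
The plan is to prove~\eqref{eq:dp2} by showing that both sides count $|\Aa_{X,f,r}|$, splitting the assignments in $\Aa_{X,f,r}$ according to whether colour~$1$ is used. The base case $X=\emptyset$ is immediate: the only assignment is the empty one, whose span we declare to be~$1$, so $Q[\emptyset,f,r]=[r=1]$; this is the value needed so that a single vertex placed at colour~$1$ is later recorded as having span~$1$. For $X\neq\emptyset$ I would write $\Aa_{X,f,r}=\Aa^{=1}\uplus\Aa^{\ge 2}$, where $\Aa^{=1}$ (resp.\ $\Aa^{\ge 2}$) consists of the assignments $c$ with $\min_{x\in X}c(x)=1$ (resp.\ $\ge 2$), and match each part with one summand on the right-hand side.

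For the easy part $\Aa^{\ge 2}$ I would use the downward shift $c\mapsto c-1$. If $c\in\Aa^{\ge 2}$ then $c(x)-1\ge\max\{f(x)-1,1\}=f_{\downarrow}(x)$, properness is preserved (pairwise differences are unchanged), and the span drops from $r$ to $r-1$; conversely, adding~$1$ to any $c'\in\Aa_{X,f_{\downarrow},r-1}$ lands back in $\Aa^{\ge 2}$, since every colour becomes at least~$2$. This gives a bijection $\Aa^{\ge 2}\leftrightarrow\Aa_{X,f_{\downarrow},r-1}$, which is empty unless $r>1$, matching the term $[r>1]\,Q[X,f_{\downarrow},r-1]$.

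The part $\Aa^{=1}$ is where the real work lies, because several vertices may share colour~$1$ and must not be counted repeatedly. Here I would exploit the linear order on $V=\{1,\ldots,n\}$: map $c\in\Aa^{=1}$ to the pair $(v,\,c|_{X\setminus\{v\}})$, where $v=\min\{x\in X: c(x)=1\}$ is the \emph{least-indexed} vertex coloured~$1$. Since $c(v)=1\ge f(v)$ forces $f(v)=1$, we have $v\in f^{-1}(1)$, so summing over $v\in f^{-1}(1)$ is exactly the outer sum. The key point to verify is that $c|_{X\setminus\{v\}}\in\Aa_{X\setminus\{v\},f_v,r}$: the bound $c(x)\ge 1+w(vx)$ comes from $|c(x)-c(v)|\ge w(vx)$ with $c(v)=1$, the bound $c(x)\ge f(x)$ is inherited, and the bound $c(x)\ge 1+[x<v]$ records precisely the minimality of $v$, i.e.\ that no vertex of smaller index is coloured~$1$; these three bounds are exactly the definition of $f_v(x)$. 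The span stays $r$ because for $r>1$ the maximum is attained off $v$ (as $c(v)=1$), while $r=1$ is absorbed by the base case. I would then check the inverse map—extending any $c'\in\Aa_{X\setminus\{v\},f_v,r}$ by setting $c(v)=1$—yields a proper assignment in $\Aa^{=1}$ whose least-indexed colour-$1$ vertex is again $v$, guaranteed by the $1+[x<v]$ term forcing $c(x)\ge 2$ for all $x<v$. This makes the correspondence a bijection and shows that the sets $\Aa_{X\setminus\{v\},f_v,r}$ for distinct $v$ correspond to disjoint parts of $\Aa^{=1}$, justifying the summation. I expect this last bookkeeping to be the main obstacle: confirming that the $[x<v]$ correction is both necessary and sufficient to turn the decomposition into a genuine partition, so that no assignment with several colour-$1$ vertices is counted more than once.
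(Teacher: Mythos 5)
Your proposal is correct and follows essentially the same route as the paper's proof: the same split according to whether colour $1$ is used, the same choice of the least-indexed colour-$1$ vertex $v$ (with the $1+[x<v]$ term enforcing uniqueness of $v$ in the partition of $\Aa^{=1}$), and the same downward-shift bijection for the remaining assignments. The only cosmetic difference is that you count the sets directly rather than wrapping the argument in the paper's (largely superfluous) induction on $|X|+r$.
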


\begin{proof}
The proof is by induction on $|X|+r$.
The formula~\eqref{eq:dp2} clearly holds when $X=\emptyset$, since there is exactly one assignment with empty domain, it is proper and its span is 1.

Assume $X\ne\emptyset$.
The set $\mathcal{A}_{X,f,r}$ partitions into two subsets $\mathcal{B}$ and $\mathcal{C}$, where $\mathcal{B}$ contains the assignments which assign color 1 to some vertex and $\mathcal{C}$ contains the remaining assignments. 

We can further partition $\mathcal{B}=\bigcup_{v\in f^{-1}(1)}\mathcal{B}_v$, where 
\[\mathcal{B}_v = \{c\in\mathcal{B}\ :\ \min c^{-1}(1)=v\}.\]
Define $\mathcal{B}'_v=\{c|_{X\setminus\{v\}}\ :\  c \in \mathcal{B}_v\}$.
Then $|\mathcal{B}'_v|=|\mathcal{B}_v|$. 
Consider an arbitrary $c\in \mathcal{B}_v$. 
Then for every $x\in X\setminus\{v\}$ we have $c(x)\ge f(x)$, $c(x)\ge f(v)+w(vx)=1+w(vx)$, and if $x<v$ then $c(x)\ge 2$.
In other words, for every $x\in X\setminus\{v\}$ we have $c(x)\ge f_v(x)$ and hence $c|_{X\setminus\{v\}} \in \mathcal{A}_{X\setminus\{v\},f_v,r}$.
It follows that $\mathcal{B}'_v \subseteq \mathcal{A}_{X\setminus\{v\},f_v,r}$.
It is also easy to verify that every assignment $c'\in\mathcal{A}_{X\setminus\{v\},f_v,r}$ can be extended to a proper assignment $c\in\mathcal{B}_v$ by putting $c(v)=1$ and $c|_{X\setminus\{v\}}=c'$. 
Hence  $\mathcal{A}_{X\setminus\{v\},f_v,r} \subseteq \mathcal{B}'_v$.
It follows that $\mathcal{B}'_v=\mathcal{A}_{X\setminus\{v\},f_v,r}$ and hence $|\mathcal{B}_v|=|\mathcal{A}_{X\setminus\{v\},f_v,r}|=Q[X\setminus\{v\},f_v,r]$, where the last equality follows from the induction hypothesis. We get $|\mathcal{B}|= \sum_{v\in f^{-1}(1)}Q[X\setminus\{v\},f_v,r]$.

If $r=1$ then $\Cc=\emptyset$. Assume $r>1$.
It is clear that the assignments in $\mathcal{C}$ are in 1-1 correspondence with the assignments in $\mathcal{C}'=\{c_{\downarrow}\ :\ c\in\mathcal{C}\}$ and the assignments in $\mathcal{C}'$ have span $r-1$. Hence $|\mathcal{C}|=|\mathcal{A}_{X,f_{\downarrow},r-1}|= Q[X,f_{\downarrow},r-1]$, where the last equality follows from the induction hypothesis.

To sum up, \[|\mathcal{A}_{X,f,r}|=|\mathcal{B}|+|\mathcal{C}|=\sum_{v\in f^{-1}(1)}Q[X\setminus\{v\},f_v,r] + [r>1]\cdot Q[X,f_{\downarrow},r-1],\] as required.
\end{proof}

With Lemma~\ref{lem:eq-dp2} it is easy to describe a dynamic programming algorithm which for every subset $X\in {V\choose n/2}$, function $f:X\rightarrow [\ell+1]$ and value $r=1,\ldots,s$ computes the value of $Q[X,f,r]$. 
First we initialize $Q[\emptyset, \emptyset,r]=[r=1]$ for every $r=1,\ldots,s$ and next the values of $Q[X,f,r]$ are computed according to Formula~\eqref{eq:dp2}, using previously computed values of array $Q$; to this end we iterate over the triples $(X,f,r)$ in nondecreasing order of $|X|+r$. 
The number of triples considered is $O(2^n(\ell+1)^{n/2} s)$ and processing each triple takes $O(n^2)$ time. We have just shown the following.

\begin{lemma}
\label{lem:compute-Q}
There is an $O(2^n(\ell+1)^{n/2} sn^2)$-time $O(2^n(\ell+1)^{n/2} s)$-space algorithm which finds the values of $Q[X,f,r]$ for all subsets $X\in {V\choose n/2}$, functions $f:X\rightarrow [\ell+1]$ and values $r=1,\ldots,s$. 
\end{lemma}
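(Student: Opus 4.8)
The plan is to turn the recurrence of Lemma~\ref{lem:eq-dp2} into a straightforward tabulation dynamic program and then to bound the number of table cells together with the cost of filling one. I would keep an array $Q$ indexed by triples $(X,f,r)$ with $X\in\binom{V}{\le n/2}$, $f:X\to[\ell+1]$ and $r\in\{1,\ldots,s\}$. Although the statement only asks for the cells with $|X|=n/2$, the right-hand side of \eqref{eq:dp2} reads a value at $X\setminus\{v\}$, so the table must also hold every strictly smaller subset; all of these still have size at most $n/2$, so the enlarged index set is exactly $\binom{V}{\le n/2}$. Before filling anything I would verify that the two functions occurring on the right-hand side land in the proper domain, so that the cells they name actually exist: since $w$ is $\ell$-bounded we have $1+w(vx)\le\ell+1$ and $1+[x<v]\le 2\le\ell+1$, whence $f_v:X\setminus\{v\}\to[\ell+1]$, and likewise $f_\downarrow(x)=\max\{f(x)-1,1\}\in[\ell+1]$.

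Next I would fix the evaluation order and read off correctness. After initializing $Q[\emptyset,\emptyset,r]=[r=1]$ for all $r$, I would process the triples in nondecreasing order of the potential $|X|+r$ and assign each cell the value prescribed by \eqref{eq:dp2}. The crucial observation is that every cell appearing on the right-hand side has strictly smaller potential: $Q[X\setminus\{v\},f_v,r]$ lowers $|X|$ by one and leaves $r$ fixed, while $Q[X,f_\downarrow,r-1]$ leaves $|X|$ fixed and lowers $r$ by one. Hence at the moment a cell is evaluated all of its dependencies already hold their final values, the order is well-founded, and the correctness of each stored entry is precisely the identity proved in Lemma~\ref{lem:eq-dp2}.

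It remains to count. The number of pairs $(X,f)$ is $\sum_{i=0}^{n/2}\binom{n}{i}(\ell+1)^i\le(\ell+1)^{n/2}\sum_{i=0}^{n}\binom{n}{i}=O\!\left(2^n(\ell+1)^{n/2}\right)$, using $(\ell+1)^i\le(\ell+1)^{n/2}$ for $i\le n/2$; multiplying by the $s$ possible values of $r$ gives $O(2^n(\ell+1)^{n/2}s)$ cells, which is also the space bound. Evaluating \eqref{eq:dp2} at one cell costs, for each of the at most $n$ vertices $v\in f^{-1}(1)$, time $O(n)$ to form $f_v$ and to index it, plus one further lookup for the $f_\downarrow$ term, i.e.\ $O(n^2)$ per cell and $O(2^n(\ell+1)^{n/2}sn^2)$ in total. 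I do not expect a genuine obstacle here: the argument is a routine verification that the recurrence admits a tabulation. The only point deserving real care is exactly the bookkeeping mentioned above, namely that $f_v$ and $f_\downarrow$ both stay inside $[\ell+1]$ and strictly decrease the potential, since this is what simultaneously guarantees that the referenced cells exist and that the computation order is sound.
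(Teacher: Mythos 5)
Your proposal is correct and follows essentially the same route as the paper: both tabulate the recurrence of Lemma~\ref{lem:eq-dp2} over all triples $(X,f,r)$ with $X\in\binom{V}{\le n/2}$, processed in nondecreasing order of $|X|+r$ starting from $Q[\emptyset,\emptyset,r]=[r=1]$, and bound the table size by $O(2^n(\ell+1)^{n/2}s)$ with $O(n^2)$ work per cell. Your added checks that $f_v$ and $f_\downarrow$ remain $[\ell+1]$-valued and that each right-hand-side reference strictly decreases the potential are sound and only make explicit what the paper leaves implicit.
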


If we use just the values of $Q[X,f,r]$ in the merge phase of the meet-in-the-middle approach, it is unclear how to avoid double-counting the same assignments. 
To overcome this problem, for a subset $X\subseteq V$, a function $f:X\rightarrow [\ell+1]$ and a  value $r=1,\ldots,s$ define $\mathcal{A^*}_{X,f,r}$ as the set of all proper assignments $c:X\rightarrow\mathbb{N}$ of the graph $G[X]$ such that $c$ has span $r$ and for every $x\in X$ , if $f(x)\le \ell$ then $c(x)=f(x)$ and otherwise $c(x)\ge f(x)$. Denote $Q^*[X,f,r]=\mathcal{A^*}_{X,f,r}$.
Observe the following.

\begin{observation}
\label{obs-Q*}
For a subset $X\subseteq V$ and a function $f:X\rightarrow [\ell+1]$ define the function $f_{\leftarrow\ell}:X\setminus f^{-1}([\ell])\rightarrow [\ell+1]$ given by the formula 
\[f_{\leftarrow\ell}(x)=\max(\{f(y) + w(yx) - \ell\ :\ {y\in f^{-1}([\ell])}\}\cup\{1\}),\] 
for every $x\in X\setminus f^{-1}([\ell])$. Then, for every $X\subseteq V$, $f:X\rightarrow [\ell+1]$ and $r=1,\ldots,s$ 
\begin{enumerate}[$(i)$]
\item if $f|_{f^{-1}([\ell])}$ is not a proper assignment then $Q^*[X,f,r]=0$;
\item if $f|_{f^{-1}([\ell])}$ is a proper assignment and $r\le \ell$ then 
\[Q^*[X,f,r]=[f^{-1}(\{r\})\ne\emptyset\text{ and }f^{-1}(\{r+1,\ldots,\ell+1\})=\emptyset];\]
\item if $f|_{f^{-1}([\ell])}$ is a proper assignment and $r\ge \ell+1$ then 
\begin{equation}
\label{eq:dp3}
Q^*[X,f,r] = Q[X\setminus f^{-1}([\ell]),f_{\leftarrow\ell},r-\ell]. 
\end{equation}
\end{enumerate}
\end{observation}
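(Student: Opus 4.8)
The plan is to prove the three cases directly from the definition of $\Aa^*_{X,f,r}$, exploiting the rigid structure that $f$ forces on any member $c$. Writing $F=f^{-1}([\ell])$ and $U=X\setminus F=f^{-1}(\{\ell+1\})$, every $c\in\Aa^*_{X,f,r}$ must satisfy $c|_F=f|_F$ exactly and $c(x)\ge\ell+1$ for all $x\in U$. Cases $(i)$ and $(ii)$ are then short structural observations, while case $(iii)$ is the heart of the argument and will be obtained by a colour-shift bijection onto $\Aa_{U,f_{\leftarrow\ell},r-\ell}$, whose cardinality is $Q[U,f_{\leftarrow\ell},r-\ell]$ by the definition of $Q$ (Lemma~\ref{lem:eq-dp2}).

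For $(i)$, note that every $c\in\Aa^*_{X,f,r}$ restricts to $f$ on $F$; hence if $f|_F$ already violates some edge of $G[F]$, then no such $c$ can be proper and $\Aa^*_{X,f,r}=\emptyset$. For $(ii)$, suppose $f|_F$ is proper and $r\le\ell$. Any free vertex $x\in U$ would receive a colour $c(x)\ge\ell+1>r$, exceeding the span $r$; therefore $\Aa^*_{X,f,r}$ can be nonempty only if $U=\emptyset$, in which case $c=f$ is the unique candidate and it lies in $\Aa^*_{X,f,r}$ exactly when its span $\max_{x\in X}f(x)$ equals $r$. I would then observe that $\max_{x\in X}f(x)=r$ is equivalent to the stated indicator $[f^{-1}(\{r\})\ne\emptyset\text{ and }f^{-1}(\{r+1,\ldots,\ell+1\})=\emptyset]$: the second conjunct simultaneously rules out colours larger than $r$ and, because $r\le\ell$, forces $U=\emptyset$, so the two sides of $(ii)$ vanish together when a free vertex is present.

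For $(iii)$ I would set up the shift map $\phi(c)(x)=c(x)-\ell$ for $x\in U$ and prove it is a bijection from $\Aa^*_{X,f,r}$ onto $\Aa_{U,f_{\leftarrow\ell},r-\ell}$. Membership in the target requires three checks for $c'=\phi(c)$: properness inside $U$, which is preserved because a global translation leaves every difference $|c(x)-c(y)|$ unchanged; the lower bound $c'\ge f_{\leftarrow\ell}$; and that $c'$ has span $r-\ell$. The lower bound is where the definition of $f_{\leftarrow\ell}$ enters: for $x\in U$ the constraint $c(x)\ge\ell+1$ gives $c'(x)\ge1$, while for every $F$--$U$ edge $yx$ the fact that $c(x)\ge\ell+1>\ell\ge f(y)$ turns $|c(x)-f(y)|\ge w(yx)$ into $c(x)\ge f(y)+w(yx)$, i.e.\ $c'(x)\ge f(y)+w(yx)-\ell$; taking the maximum over all $y\in F$ (non-neighbours contribute $f(y)-\ell\le 0$ and are absorbed by the trailing $\{1\}$) recovers exactly $f_{\leftarrow\ell}(x)$. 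The inverse map adds $\ell$ back on $U$ and reattaches $f$ on $F$, and I would verify properness of the result on all of $G[X]$ by splitting the edges into the three classes $F$--$F$ (the hypothesis that $f|_F$ is proper), $U$--$U$ (translation invariance), and $F$--$U$ (the inequality $c'(x)\ge f(y)+w(yx)-\ell$ read backwards, again using that free colours exceed fixed ones, so the absolute value opens with the known sign).

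The step I expect to require the most care is the span bookkeeping, and in particular the interaction between the fixed block occupying colours $\{1,\ldots,\ell\}$ and the shifted free block, together with the empty-domain convention $Q[\emptyset,\emptyset,\cdot]=[\,\cdot=1\,]$ at the base of $Q$. Provided $U\ne\emptyset$, every free vertex carries a colour at least $\ell+1$, so the maximum colour of $c$ is attained on $U$ and equals $\ell+\max_{x\in U}c'(x)$; hence $c$ has span $r$ exactly when $c'$ has span $r-\ell$, which is what makes $\phi$ land in the $(r-\ell)$-slice. I would treat the boundary $U=\emptyset$ with explicit attention, since there the fixed part alone realises only spans at most $\ell$ while the right-hand side is evaluated on the empty free domain; checking that this regime is consistent with the convention for $Q$ is the one genuinely delicate point, whereas the constraint translation above is routine once the sign of $c(x)-f(y)$ on $F$--$U$ edges is fixed.
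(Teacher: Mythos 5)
The paper states this result as an Observation with no proof at all, so there is no official argument to compare against; your reading of $\Aa^*_{X,f,r}$ and the shift bijection $c\mapsto c|_U-\ell$ onto $\Aa_{U,f_{\leftarrow\ell},r-\ell}$ is clearly the intended justification. Parts $(i)$ and $(ii)$ are handled correctly (for $X\ne\emptyset$), and in part $(iii)$ your verification of properness on the three edge classes, of the lower bound $c'\ge f_{\leftarrow\ell}$ with non-neighbours absorbed by the trailing $\{1\}$, and of the span transfer between $r$ and $r-\ell$ is sound whenever $U=f^{-1}(\{\ell+1\})$ is nonempty.

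However, the one step you defer --- checking that the $U=\emptyset$ regime ``is consistent with the convention for $Q$'' --- is not a routine verification: it fails. Take $X\ne\emptyset$ and $f:X\to[\ell]$ (so $U=\emptyset$) with $f$ proper on $G[X]$, and $r=\ell+1$. The only candidate assignment is $c=f$, whose span is $\max f\le\ell<r$, so $Q^*[X,f,\ell+1]=0$; but the right-hand side of $(iii)$ is $Q[\emptyset,\emptyset,1]=1$ by the base case of Lemma~\ref{lem:eq-dp2}. Hence equation~\eqref{eq:dp3} is false in this corner under the paper's own conventions --- and it is exactly the exact-span reading of ``span $r$'' that you (correctly) rely on to get the indicator in $(ii)$, so one cannot repair $(iii)$ by reinterpreting span as ``at most $r$'' without breaking $(ii)$. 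To close the proof you must either add the hypothesis $f^{-1}(\{\ell+1\})\ne\emptyset$ to $(iii)$ and record separately that $Q^*[X,f,r]=0$ when $f^{-1}(\{\ell+1\})=\emptyset$, $X\ne\emptyset$ and $r\ge\ell+1$, or otherwise correct the formula for the empty free part. (An analogous, smaller mismatch occurs in $(ii)$ at $X=\emptyset$, $r=1$.) This is a defect of the statement rather than of your construction, but a complete proof has to surface and resolve it rather than defer it as a consistency check that is presumed to succeed.
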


Now we proceed to the merge phase of our meet-in-the-middle algorithm.
For a function $f:X\rightarrow[\ell+1]$ we define function $\tilde{f}:\overline{X}\rightarrow[\ell+1]$ such that for every $v\in\overline{X}$,
\[\tilde{f}(v) = 1 + \max(\{1+w(uv)-f(u)\ :\ uv\in E,\ u\in X\}\cup\{[v<\max f^{-1}(1)]\}).\]
The role of the function $\tilde{f}$ is similar as $\overline{f}$ in determining the span using the meet-in-the-middle approach; the only difference is that if for some $x\in X$ we have $f(x)=1$ then for every $v\in\overline{X}$, if $v<x$ then $\tilde{f}(v)\ge 2$.
Informally, this helps us to avoid counting the same assignment once for every partition of the ``middle color'' into parts of relevant sizes.
Now we can formulate the counting counterpart of Lemma~\ref{lem:mim}. 

\begin{lemma}
\label{lem:counting-mim}
For a given graph $G$, weight function $w$ and integer $s\in\mathbb{N}$ the number of 
proper assignments of span $s$ is equal to 
\[\sum_{s^*=1}^s\sum_{X\in {V\choose n/2}}\sum_{\substack{f:X\rightarrow[\ell+1]\\f^{-1}(1)\ne\emptyset}} Q^*[X,f,s^*]\cdot Q[\overline{X},\tilde{f},s-s^*+1].\]
\end{lemma}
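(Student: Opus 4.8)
The plan is to establish a bijection between proper assignments of $(G,w)$ of span exactly $s$ and the triples counted by the sum on the right-hand side. Each term of the sum fixes a ``splitting point'': a subset $X\in{V\choose n/2}$, a function $f$ on $X$ describing (up to translation) how the vertices of $X$ sit relative to the maximum color used on $X$, a span $s^*$ used on the $X$-side, and implicitly the span $s-s^*+1$ used on the $\overline X$-side. The key structural idea, exactly as in Lemma~\ref{lem:mim}, is that any assignment $c^*$ can be cut at the ``middle'' color $s_1=c^*(v_{n/2})$ after sorting vertices by color; the $X$-side assignment is reflected downward (so that the middle color becomes color $1$) and the $\overline X$-side assignment is kept as is and shifted so that its minimum is the middle color. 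The two pieces overlap precisely in the middle color, which is why the total span is $s^*+(s-s^*+1)-1=s$.

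First I would make the correspondence precise. Given a proper $c^*$ of span $s$, I sort $V=\{v_1,\dots,v_n\}$ so that $c^*(v_i)\le c^*(v_{i+1})$, set $X=\{v_1,\dots,v_{n/2}\}$ and $s_1=c^*(v_{n/2})$, and define $f(x)=1+\min\{s_1-c^*(x),\ell\}$ exactly as in the proof of Lemma~\ref{lem:mim}. The reflected $X$-assignment $x\mapsto 1+s_1-c^*(x)$ lies in $\Aa^*_{X,f,s^*}$ for $s^*=s_1$ (this is where the $\Aa^*$ sets, which \emph{pin} the color of every vertex $x$ with $f(x)\le\ell$ to exactly $f(x)$, enter: they guarantee that the reflected assignment is recovered uniquely from $f$ on the ``close'' vertices and counted once via $Q^*$). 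On the other side, $c^*|_{\overline X}$ shifted down by $s_1-1$ lies in $\Aa_{\overline X,\tilde f,s-s^*+1}$, where the computation of the lower bounds $\tilde f(v)$ mirrors the $\overline f$ computation of Lemma~\ref{lem:mim}. The only subtlety beyond Lemma~\ref{lem:mim} is the extra term $[v<\max f^{-1}(1)]$ in the definition of $\tilde f$, which forces vertices $v\in\overline X$ lying \emph{before} the last minimum-color vertex of $X$ (in the fixed linear order on $V=\{1,\dots,n\}$) to receive color $\ge 2$ on the $\overline X$-side, i.e. strictly above the middle color.

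The heart of the argument, and the step I expect to be the main obstacle, is proving that this correspondence is a bijection — in particular that it is well-defined and injective without double-counting. The danger is that a single assignment $c^*$ could be split in several ways: the set of vertices achieving the middle color $s_1$ may be large, and one must decide canonically which of them fall on the $X$-side versus the $\overline X$-side. This is exactly what the condition $f^{-1}(1)\ne\emptyset$ together with the $[v<\max f^{-1}(1)]$ tiebreak is designed to control. I would argue that among all vertices colored $s_1$ by the reflected $X$-assignment (equivalently $f(\cdot)=1$), the linear order on $V$ pins down a unique boundary, so that a vertex $v\in\overline X$ is allowed to share the middle color only when $v>\max f^{-1}(1)$; this makes the ``middle color'' membership unambiguous and forbids re-splitting the same $c^*$ at the same $s_1$ in two different ways. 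Thus each proper $c^*$ of span $s$ maps to exactly one triple $(X,f,s^*)$, matching the informal remark after the definition of $\tilde f$ about avoiding counting one assignment once per partition of the middle color.

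Finally I would verify the reverse direction: given any $(X,f,s^*)$ with $f^{-1}(1)\ne\emptyset$, any $c_1\in\Aa^*_{X,f,s^*}$ and any $c_2\in\Aa_{\overline X,\tilde f,s-s^*+1}$, the glued assignment
\[
c(x)=\begin{cases} 1+s^*-c_1(x) & x\in X,\\ s^*+c_2(x)-1 & x\in\overline X,\end{cases}
\]
is a proper assignment of $(G,w)$ of span exactly $s$, and that distinct input triples or distinct pairs $(c_1,c_2)$ yield distinct $c$. Properness within $X$ and within $\overline X$ is inherited from $c_1$ and $c_2$; properness across the cut follows from the definition of $\tilde f$ (each cross edge constraint $|c(u)-c(v)|\ge w(uv)$ unwinds to one of the bounds $1+w(uv)-f(u)$ collected in $\tilde f(v)$), precisely as in Lemma~\ref{lem:mim}. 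That the span is $s$ follows because $c_1$ reflected occupies colors $1,\dots,s^*$, $c_2$ shifted occupies colors $s^*,\dots,s^*+(s-s^*+1)-1=s$, and color $s^*$ is attained on both sides, so no gaps are created at the junction. Counting the number of such $c$ over all choices gives $\sum_{s^*}\sum_X\sum_f Q^*[X,f,s^*]\cdot Q[\overline X,\tilde f,s-s^*+1]$, and the bijection established above shows this equals the number of proper assignments of span $s$.
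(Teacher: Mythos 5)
Your proposal is correct and follows essentially the same route as the paper: the paper partitions the set $\mathcal{D}$ of span-$s$ assignments into classes $\mathcal{D}_{s^*,X,f}$ via the lexicographic order $(c(i),i)$ and shows each class has size $Q^*[X,f,s^*]\cdot Q[\overline{X},\tilde{f},s-s^*+1]$, which is exactly your bijection phrased as a disjoint decomposition, with the same use of the pinned colors in $\mathcal{A}^*$ and the $[v<\max f^{-1}(1)]$ tiebreak to make the split at the middle color unique.
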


\begin{proof}
Let $\mathcal{D}$ be the set of all proper assignments of span $s$.
For an assignment $c\in\mathcal{D}$ define a total order of $V$ as follows: for $i,j\in V$ we have $i\prec j$ iff $(c(i),i)\le_{{\rm lex}} (c(j),j)$, where $\le_{{\rm lex}}$ is the lexicographic order. Then $c$ defines a permutation of the vertices $v^c_1\prec v^c_2 \prec \cdots \prec v^c_n$. Then $\mathcal{D}=\biguplus_{s^*=1}^s\mathcal{D}_{s^*}$, where
\[\mathcal{D}_{s^*} = \{c\in\mathcal{D}\ :\ c(v^c_{n/2})=s^*\}\]
Moreover, $\Dd_{s^*}=\biguplus_{X\in {V\choose n/2}}\Dd_{s^*,X}$, where
\[\Dd_{s^*,X} = \{c\in\Dd_{s^*}\ :\ \{v^c_1,\ldots,v^c_{n/2}\}=X\}.\]
Finally, \[\Dd_{s^*,X}=\biguplus_{\substack{f:X\rightarrow[\ell+1]\\ f^{-1}(1)\ne\emptyset}}\Dd_{s^*,X,f},\] 
where $\Dd_{s^*,X,f}$ is the set of assignments $c\in\Dd_{s^*,X}$ such that
for every $x\in X$, if $f(x)\le\ell$ then $c(x)=s^*-f(x)+1$ and if $f(x)=\ell+1$ then $c(x)\le s^*-f(x)+1$. Note that the condition $f^{-1}(1)\ne\emptyset$ is necessary to satisfy the defining condition of $\Dd_{s^*}$; in particular $v^c_{n/2}=\max f^{-1}(1)$.

Consider an arbitrary $c\in \Dd_{s^*,X,f}$.
Now observe that for every $v\in\overline{X}$ and $u\in X$ such that $uv\in E$, we have
$c(v)\ge \max\{c(u)+w(uv),s^*\}$. Moreover, if $v<v^c_{n/2}$, i.e.\ $v<\max f^{-1}(1)$, then $c(v)\ge s^*+1$. Hence,
\begin{equation*}
\begin{split}
c(v) & \ge \max(\{c^*(u)+w(uv)\ :\ uv\in E,\ u\in X\}\cup\{s^*+[v<\max f^{-1}(1)]\}) \\
       & =   \max(\{s^* +w(uv) - f(u) + 1\ :\ uv\in E,\ u\in X\}\cup\{s^*+[v<\max f^{-1}(1)]\}) \\
       & =  s^* - 1 + \tilde{f}(v).
\end{split}
\end{equation*}
It follows that $|\Dd_{s^*,X,f}|=Q^*[X,f,s^*]\cdot Q[\overline{X},\tilde{f},s-s^*+1]$, as required.
\end{proof}

From Lemma~\ref{lem:compute-Q}, Observation~\ref{obs-Q*} and Lemma~\ref{lem:counting-mim} we infer the following theorem.

\begin{theorem}
For every $\ell$-bounded weight function the number of all proper assignments of a given span can be computed in $O^*(2^n(\ell+1)^{n/2})$ time.
\end{theorem}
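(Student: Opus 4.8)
The plan is to turn the three preceding results into a single algorithm that makes one pass over all balanced bipartitions of $V$. Since Lemma~\ref{lem:counting-mim} already rewrites the number of proper assignments of span $s$ as the triple sum $\sum_{s^*}\sum_{X}\sum_{f}Q^*[X,f,s^*]\cdot Q[\overline{X},\tilde{f},s-s^*+1]$, the only thing left is to show that every quantity in that expression can be evaluated within the stated budget; its correctness is inherited from Lemma~\ref{lem:counting-mim} and needs no further argument.

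First I would run the dynamic program of Lemma~\ref{lem:compute-Q} to tabulate $Q[X,f,r]$ for all $X$ with $|X|\le n/2$, all $f:X\rightarrow[\ell+1]$ and all $r\in\{1,\ldots,s\}$. Note that we need the values not only for the halves $X\in{V\choose n/2}$ but also for the complementary halves $\overline{X}$ (which again range over ${V\choose n/2}$) and for the shrunken domains $X\setminus f^{-1}([\ell])$ that appear in Observation~\ref{obs-Q*}; hence tabulating all subsets of size at most $n/2$ is exactly what is required, and this is what the recurrence of Lemma~\ref{lem:eq-dp2} produces anyway. Storing this table costs $O(2^n(\ell+1)^{n/2}s)$ space and $O(2^n(\ell+1)^{n/2}sn^2)$ time.

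Next I would iterate over all triples $(X,f,s^*)$ with $X\in{V\choose n/2}$, $f:X\rightarrow[\ell+1]$, $f^{-1}(1)\ne\emptyset$ and $s^*\in\{1,\ldots,s\}$, accumulating the product from Lemma~\ref{lem:counting-mim} for each. The first factor $Q^*[X,f,s^*]$ is obtained through Observation~\ref{obs-Q*}: in polynomial time one tests whether $f|_{f^{-1}([\ell])}$ is proper and then, according to whether $s^*\le\ell$ or $s^*\ge\ell+1$, either reads off the indicator value of case $(ii)$ or computes $f_{\leftarrow\ell}$ and looks up $Q[X\setminus f^{-1}([\ell]),f_{\leftarrow\ell},s^*-\ell]$ by~\eqref{eq:dp3}. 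The second factor is obtained by computing $\tilde{f}$ on $\overline{X}$ in $O(n^2)$ time and reading $Q[\overline{X},\tilde{f},s-s^*+1]$ from the table. Summing all these products yields the answer.

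For the running time, the number of processed triples is $O(2^n(\ell+1)^{n/2}s)$ and each is handled in $\mathrm{poly}(n)$ time, so the merge phase matches the cost of building the table; the parameter $s$ enters only polynomially, giving total time $O(2^n(\ell+1)^{n/2}\cdot\mathrm{poly}(n,\ell,s))=O^*(2^n(\ell+1)^{n/2})$, as claimed. I expect the only genuinely delicate point to be the faithful evaluation of $Q^*$ via Observation~\ref{obs-Q*} — in particular forming the restricted function $f_{\leftarrow\ell}$ on $X\setminus f^{-1}([\ell])$ and testing properness of $f$ on $f^{-1}([\ell])$ — since an error there would silently double-count or drop assignments; everything else is routine bookkeeping around the already-proved identities.
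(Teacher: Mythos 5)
Your proposal is correct and follows exactly the route the paper takes: the paper's ``proof'' of this theorem is the single sentence citing Lemma~\ref{lem:compute-Q}, Observation~\ref{obs-Q*} and Lemma~\ref{lem:counting-mim}, and your write-up simply makes explicit the tabulation, the evaluation of $Q^*$ via the observation, and the merge over triples $(X,f,s^*)$, with the same time accounting. The only point worth adding is the (standard) remark that one may assume $s\le 1+(n-1)\ell$, since otherwise the answer is $0$, so the factor $s$ is indeed absorbed by the $O^*$ notation.
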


\section{Hardness of \TCol}
\label{sec:hardness}

In this section we give a lower bounds for the time complexity of \TCol, under ETH.
To this end we present a reduction from \SetCover. 
The instance of the decision version of \SetCover consists of a family of sets $\mathcal{S}=\{S_1,\ldots,S_m\}$ and a number $k$.
The set $U=\bigcup\mathcal{S}$ is called \emph{the universe} and we denote $n=|U|$.
The goal is to decide whether there is a subfamily $\mathcal{C}\subseteq\mathcal{S}$ of size at most $k$ such that $\bigcup {\cal C} = U$ (then we say the instance is {\em positive}).

In the following lemma we reduce {\sc Set Cover} to the decision version of \TCol, where for a given instance $(G,w)$ and a number $s$ we ask whether there is a proper assignment of  span at most $s$ (then we say the instance is {\em positive}). We say that an instance $(\mathcal{S},k)$ of \SetCover is equivalent to an instance $(G,w,k)$ of \TCol when $(\mathcal{S},k)$ is positive iff $(G,w,k)$ is positive. For every edge $e$ of $G$, every pair $(e,d)$ for $d\in t(e)$ is called a {\em constraint}.

\begin{lemma}
\label{th:SC-to-TC}
Let $(\mathcal{S},k)$ be an instance of \SetCover with $m$ sets
and universe of size $n$ and let $A\in[1,m]$ and $B\in[1,n]$ be two reals.
Then we can generate in polynomial time an equivalent instance of \TCol which has $O\left(\frac{n}{B} + \frac{m}{A} \cdot \max\{1, \log A\}\right)$ vertices, $O^*\left(2^A \cdot m^B\right)$ constraints and is $O\left( 2^A\cdot m^B\right)$-bounded.
\end{lemma}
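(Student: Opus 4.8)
The plan is to design a gadget-based reduction from \SetCover that encodes each set and each universe element in a \TCol instance, exploiting the key feature that $t(e)$ can be an arbitrary forbidden-difference set and that the span $s$ can be exponentially large. Since we want only $O(n/B + (m/A)\log A)$ vertices but allow $O^*(2^A m^B)$ distinct constraints and an $O(2^A m^B)$ bound, the guiding idea is a \emph{bit-packing}: a single vertex should carry, in its color, information about many sets or many elements at once. First I would partition the $m$ sets into $O(m/A)$ blocks of $A$ sets each, and partition the $n$ universe elements into $O(n/B)$ blocks of $B$ elements each. A block of $A$ sets has $2^A$ possible ``selection patterns'' (which of the $A$ sets are chosen into the cover), and a block of $B$ elements can be indexed by $m^B$ or fewer tuples; these two exponential quantities are exactly where the allowed constraint-count and $\ell$-bound budget of $O^*(2^A m^B)$ will be spent.

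Concretely, I would introduce one vertex $p_i$ per block of sets whose color (in a suitable range) encodes a subset of that block's $A$ sets, giving $2^A$ legal colors per such vertex, and one vertex $q_j$ per block of elements whose color encodes which element of that block is being ``witnessed.'' To enforce the budget constraint $|\mathcal{C}|\le k$, I would add auxiliary counting/selector vertices together with $t$-constraints that force the total number of chosen sets across all blocks to be at most $k$; the standard trick is to let the color differences between consecutive selector vertices encode partial sums, forbidding (via $t(e)$) any difference that would overshoot. To enforce coverage, for each element-block vertex $q_j$ and each set-block vertex $p_i$ I would place an edge whose forbidden-difference set $t(p_iq_j)$ rules out exactly those pairs (pattern of $p_i$, witness of $q_j$) in which the witnessed element is \emph{not} covered by any chosen set in that block — so a proper coloring exists iff every element can name a block and a chosen set covering it, i.e.\ iff a cover of size $\le k$ exists. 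Each such edge carries at most $2^A\cdot m^B$ forbidden differences and its weights are bounded by $O(2^A m^B)$, matching the claimed bounds; the vertex count is dominated by the $O(m/A)$ set-block vertices (the $\log A$ factor arises from $O(\log A)$ bits/auxiliary vertices needed to verify or decode each block's pattern) plus $O(n/B)$ element-block vertices.

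The verification splits into three routine-but-careful parts that I would carry out in order: (1) bound the three parameters (vertices, constraints, $\ell$), which is just counting the gadgets and the sizes of the $t(e)$ sets and confirming all weights are $O(2^A m^B)$; (2) show a size-$k$ set cover yields a proper assignment, by reading off, for each block, the pattern of chosen sets and, for each element, a witness set covering it, then checking all edge-constraints are satisfied and the selector-sum constraint holds; and (3) show a proper assignment yields a cover of size $\le k$, by decoding the chosen-set patterns from the $p_i$-colors, verifying via the coverage-edges that every element is covered, and reading off $|\mathcal{C}|\le k$ from the selector vertices. I would also confirm the reduction runs in polynomial time, which requires that the instance size — dominated by writing out the $O^*(2^A m^B)$ forbidden differences on each edge — stays polynomial for the intended later choice of $A=O(\log m)$ and $B=O(1)$ (or whatever regime the subsequent ETH argument uses), though the lemma as stated is parametric in $A,B$.

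\medskip

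\noindent\emph{Main obstacle.} The hard part will be engineering the $t$-constraints so that the global cardinality bound $|\mathcal{C}|\le k$ is enforced \emph{correctly across block boundaries} while keeping the forbidden-difference sets small enough to respect the $O(2^A m^B)$ bound on $\ell$. Local per-block coverage constraints are straightforward to encode as forbidden differences between a set-block vertex and an element-block vertex, since both live in small color ranges. The global ``at most $k$ chosen sets'' condition, however, couples all $O(m/A)$ set-blocks, and a naive prefix-sum encoding would blow up either the number of auxiliary vertices (violating the $O((m/A)\log A)$ bound) or the span/weight (violating the $\ell$-bound). Getting the accounting of chosen sets to fit within the $\log A$ per-block slack — essentially packing the ``how many sets chosen in this block'' count into $O(\log A)$ additional color-encoded vertices and chaining the partial sums with bounded-weight edges — is the delicate design step on which the whole bound hinges.
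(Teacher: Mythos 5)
Your overall architecture for the coverage part matches the paper's: one vertex per block of $A$ sets with $2^A$ colors encoding the selection pattern, one vertex per block of $B$ elements with up to $m^B$ colors encoding a $B$-tuple of witnessing sets, and cross edges whose forbidden differences kill exactly the (witness, pattern) pairs where a witnessing set is not selected. However, two genuine gaps remain. First, and most importantly, you explicitly leave the cardinality constraint $|\mathcal{C}|\le k$ unresolved, flagging it as ``the delicate design step on which the whole bound hinges'' --- but that step is precisely the content that must be supplied for the lemma to be proved. The paper does \emph{not} use a prefix-sum chain of selector vertices; instead it uses a ``parsimonious module'' with $1+\lfloor\log A\rfloor$ vertices per set-block, the $j$-th representing the value $2^j$ (the $j$-th bit of $|\Ss_i(c)|$). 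Each such vertex has exactly two options: a fixed color in a private per-block slot, or a color in a single shared ``free space'' interval of length $2k$. Edges to $s_i$ force the vertex out of its slot whenever the corresponding bit of the block's count is $1$, and pairwise constraints $t(xy)=\{0,\ldots,r(x)+r(y)-1\}$ make the ejected vertices behave like disjoint disks of radius $2^j$, so the free space of length $2k$ can absorb total radius at most $k$. This packing argument is what makes the accounting work within $O((m/A)\log A)$ extra vertices and without blowing up the span; your proposal does not contain it or an equivalent.

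Second, you gloss over two mechanisms that the correctness argument actually needs. (a) \TCol is the list-free variant, so ``allowed colors'' $A(v)$ cannot simply be declared; the paper first pins two anchor vertices $v_L,v_R$ to colors $1$ and $s$ via $t(v_Lv_R)=\{0,\ldots,s-2\}$ and then implements every list as forbidden distances to $v_L$. (b) Since a constraint only restricts the \emph{absolute difference} $|c(u)-c(v)|$, a forbidden distance intended to exclude the pair (witness $j$, pattern $f$) could accidentally exclude a legitimate pair (witness $j'$, pattern $f'$) with the same difference. The paper prevents this by spacing the witness colors $2^A$ apart (gaps of length $2^A-1$), so that each forbidden distance in $t(us_i)$ is ``owned'' by a unique witness index; the forward direction of correctness (a cover yields a proper assignment) relies on exactly this disjointness. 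Without both of these ingredients your construction as described cannot be verified.
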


\begin{proof}
For convenience we assume that $A$ and $B$ are natural numbers, since otherwise we round $A$ and $B$ down and the whole construction and its analysis is the same, up to some details.

In the proof we  consider coloring of the vertices as placing the vertices on
a number line in such a way that every vertex is placed in the coordinate
equal to its color.

Let $\Ss = \{S_1,\ldots,S_m\}$.
We are going to construct a complex instance $(G=(V,E),t,s)$ of \TCol. 
We describe it step-by-step and show some of its properties.

We begin by putting vertices $v_L$ and $v_R$ in $V$ and $t(v_Lv_R)=\{0,\ldots,s-2\}$, i.e.\ in every proper assignment $v_L$ has color $1$ and $v_R$ has color $s$, or the other way around; w.l.o.g.\ we assume the first possibility. We specify $s$ later.

In what follows, whenever we put a new vertex $v$ in $V$, we will specify the set $A(v)$ of its {\em allowed} colors. 
Formally, this corresponds to putting $t(v_Lv)=\{d\in\{0,\ldots,s-1\}\ :\ d+1\not\in A(v)\}$.

Our instance will consist of three separate modules (the set choice module, the witness module and the parsimonious module).
By separate we mean they have disjoint sets of vertices $V_S$, $V_U$ and $V_P$ and moreover they have disjoint sets of allowed colors, i.e.\ for $i,j\in\{S,U,P\}$, when $x\in V_i$ and $y\in V_j$ for $i\ne j$ then $A(x)\cap A(y) = \emptyset$. However the modules will interfere with each other by forbidding some distances between pairs of vertices from two different modules.

\begin{figure}[t]
\begin{center}
  \begin{minipage}{0.45\textwidth}
    \centering
    \def\svgwidth{2in}
    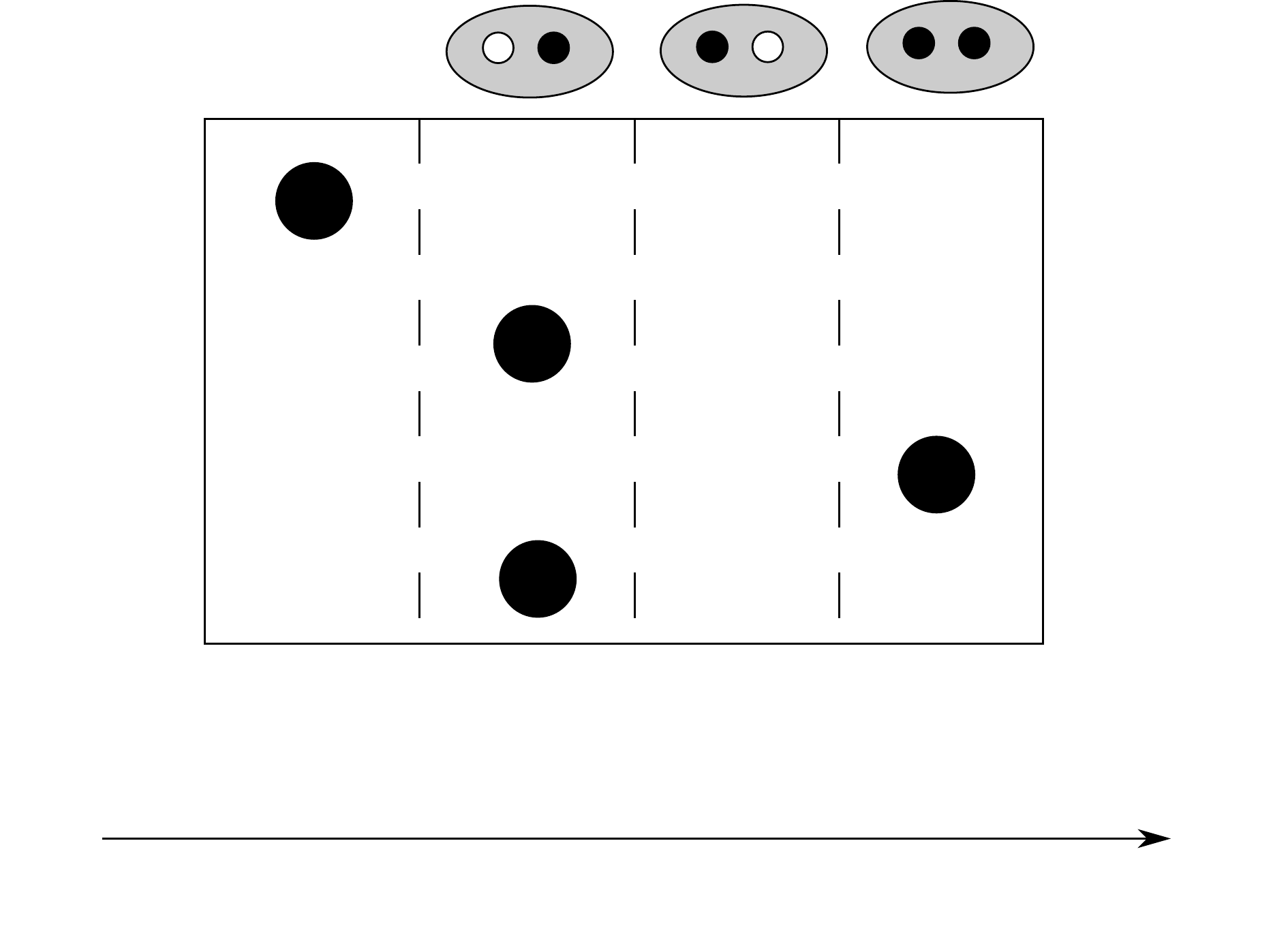
    \caption{The set choice module.}
  \end{minipage}
  \vspace{0.1\textwidth}
  \begin{minipage}{0.45\textwidth}
    \centering
    \def\svgwidth{2in}
    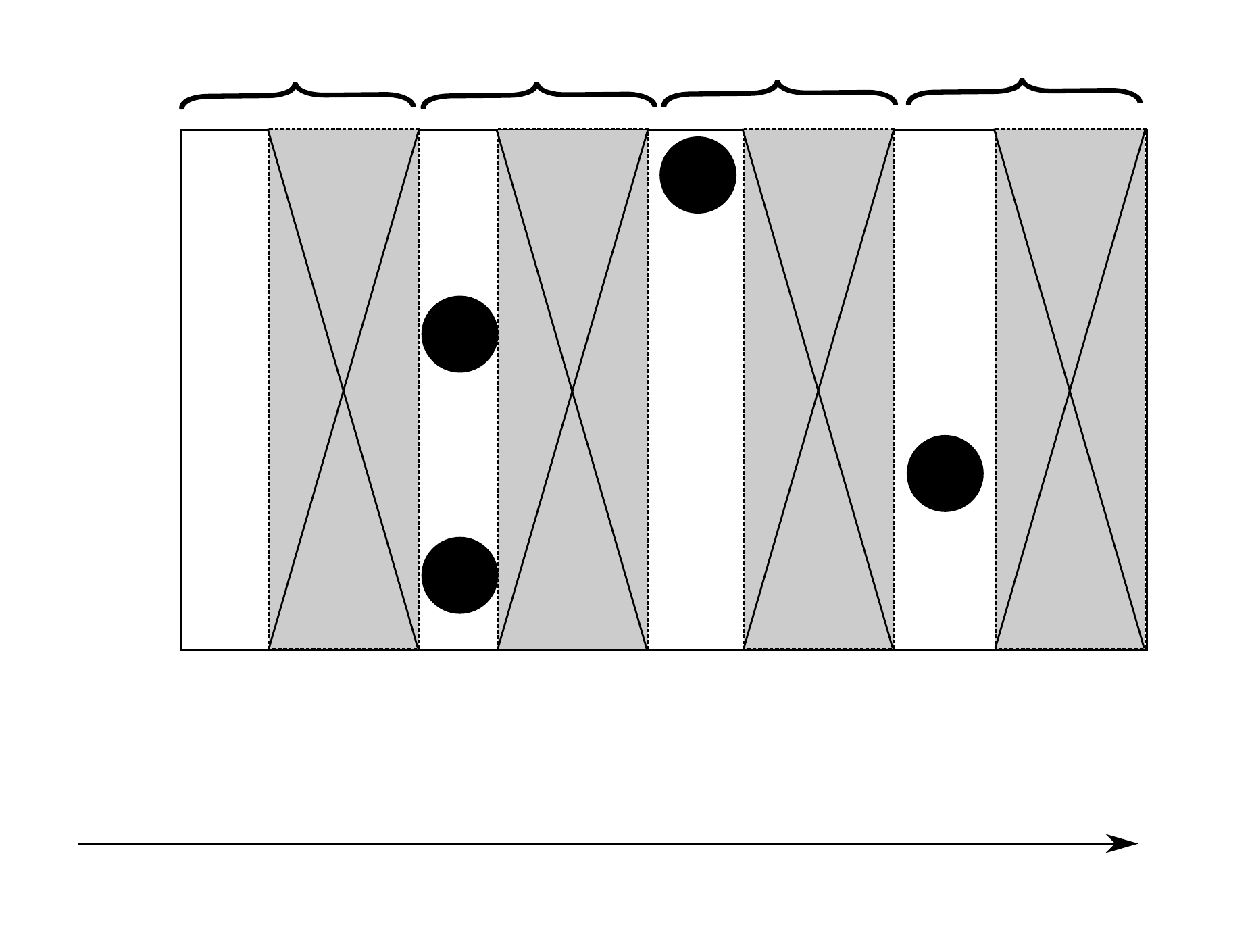
    \caption{
      The witness module. (The grey areas are the gaps between the
      $m^B$ potentially allowed positions.)
    }
  \end{minipage}
\end{center}
\vspace{-2cm}
\end{figure}

\heading{The set choice module}
The first module represents the sets in $\Ss$.
For every $i=1,\ldots,\left\lceil \frac{m}{A} \right\rceil$ the set $V_S$ contains a vertex $s_i$.
Vertex $s_i$ represents the $A$ sets
\[\Ss_i=\{S_{(i - 1) \cdot A + 1}, S_{(i - 2) \cdot A + 2}, \ldots, S_{i \cdot A}\}\] (and the last vertex $s_{\left\lceil {m}/{A} \right\rceil}$ represents $\Ss_{\left\lceil {m}/{A} \right\rceil}=S_{(\left\lceil {m}/{A} \right\rceil-1)A+1}, \ldots, S_m$).
We also put $A(s_i)=\{1,\ldots,2^A\}$ for every $s_i\in V_S$.
The intuition is that the color $c\in[2^A]$ of a vertex $s_i$ corresponds to a subset $\Ss_i(c)\subseteq\Ss_i$, i.e.\ the choice of sets from $\Ss_i$ to the solution of \SetCover.

\heading{The witness module}
Let denote the elements of the universe as $e_1, e_2, \ldots, e_n.$
For every $i=1,\ldots,\left\lceil \frac{n}{B} \right\rceil$ the set $V_U$ contains a vertex $u_i$.
Vertex $u_i$ represents the $B$ elements
\[U_i=\{e_{(i - 1) \cdot B + 1}, e_{(i - 2) \cdot B + 2}, \ldots, e_{i \cdot B}\}\] (and the last vertex $u_{\left\lceil {n}/{B} \right\rceil}$ represents $U_{\left\lceil {n}/{B} \right\rceil}=e_{(\left\lceil {n}/{B} \right\rceil-1)B+1}, \ldots, e_n$).

This time vertices $V_U$ do not need to have the same sets of allowed colors, but for every $u\in V_U$ we have $A(u) \subseteq \{1+i\cdot 2^A\ :\ i=1,\ldots,m^B\}$. 
Note that every vertex has at most $m^B$ allowed colors and there are gaps of length $2^A-1$ where no vertex is going to be assigned.

We say that a sequence $(S_{w_1},\ldots,S_{w_B})\in\Ss^B$ is a {\em witness} for a vertex $u_i\in V_U$ when
\[U_i \subseteq \bigcup_{j=1}^B S_{w_j}.\]
For every $i=1,\ldots,m^B$ color $1+i\cdot 2^A$ corresponds to the $i$-th sequence in the set $\Ss^B$ (say, in the lexicographic order of indices); we denote this sequence by $\Ww_i$. Then, for every $u\in V_U$,
\[A(u) = \{1+i\cdot 2^A\ :\ \text{$\Ww_i$ is a witness for u, }i=1,\ldots,m^B \}.\]
The intuition should be clear: color of a vertex $u_i\in V_U$ in a proper assignment represents the choice of at most $B$ sets in the solution of \SetCover which cover $U_i$.

\heading{The interaction between the set choice module and the witness module}
As we have seen, every assignment $c$ of colors to the vertices determines a choice of a subfamily $\Ss(c)\subseteq\Ss$, where 
$\Ss(c)=\bigcup_{i=1}^{\left\lceil {m}/{A} \right\rceil}\Ss_i(c(i))$. 
Similarly, $c$ determines a choice of a subfamily $\Ss'(c)\subseteq\Ss$, where 
$\Ss'(c)=\bigcup_{u\in V_U}\Ww_{c(u)}$. 
It should be clear that we want to force that in every proper assignment $\Ss'(c)\subseteq \Ss(c)$. To this end we introduce edges between the two modules.

\begin{figure}[t]
  \centering
  \def\svgwidth{3in}
  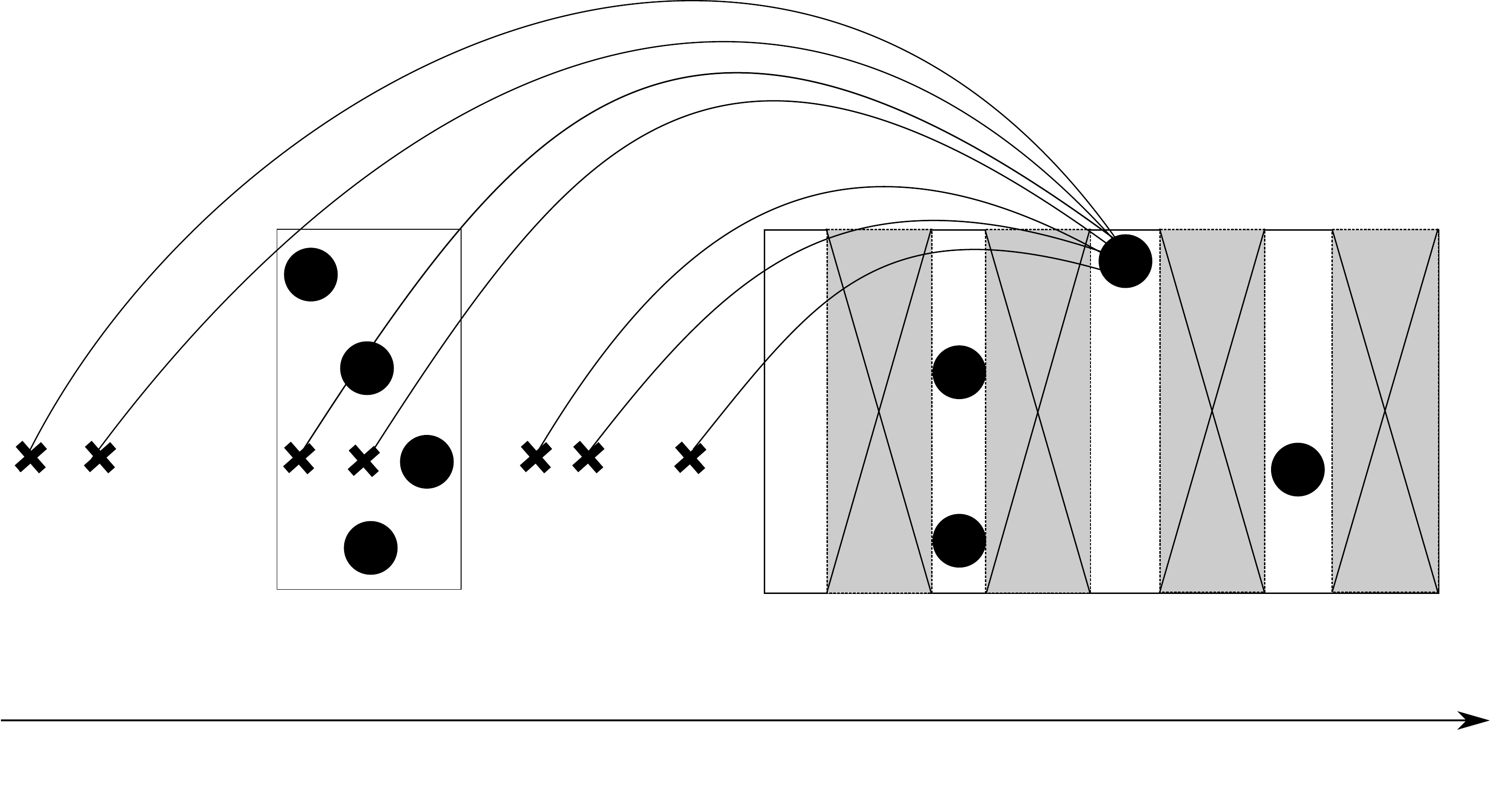
  \caption{
    The interaction between a vertex $s_i$ in the set choice module and a vertex $u$ in the witness module.
    All the drawn arcs are forbidden distances between $s_i$ and $u$.
    Note that for every possible color $1+j\cdot 2^A$ of $u$ the subset of $[2^A]$ excluded
by the forbidden distances in $t(us_i)$ is exactly $F_{i,j}$.
  }
  \vspace{-0.5cm}
\end{figure}

For $i=1,\ldots,\left\lceil \frac{m}{A} \right\rceil$ and $j=1,\ldots,m^B$ define the following set of forbidden colors
\[F_{i,j} = \{c\in [2^A]\ :\ \Ww_j\cap\Ss_i \not\subseteq \Ss_i(c)\}.\]

The intuition is the following: If a proper assignment colors a vertex $u_i\in V_U$ with color $1+j\cdot2^A$ (i.e.\ it assigns the witness $\Ww_j$ to the set $U_i$) then it cannot color the vertex $s_i$ with colors from $F_{i,j}$ (i.e.\ choose this subsets of $\Ss_i$ corresponding to these colors), for otherwise $\Ss'(c)\not\subseteq \Ss(c)$.

{\bf Claim 1} Consider any proper assignment $c:V\rightarrow[s]$. 
If for every $i=1,\ldots,\left\lceil \frac{m}{A} \right\rceil$ we have
$c(s_i)\not\in\bigcup_{u\in V_U} F_{i,c(u)}$, then $\Ss'(c)\subseteq \Ss(c)$.

{\em Proof of the claim:} Consider a set $S_t \in \Ww_{c(u)}$ for an arbitrary $u\in V_U$. Then $S_t\in \Ss_i$ for some $i$. From the assumption, $c(s_i)\not\in F_{i,c(u)}$, so $\Ww_{c(u)}\cap \Ss_i \subseteq \Ss_i(c)$. Hence, $S_t \in \Ss(c)$, as required.

Hence we would like to add some forbidden distances to our instance to make the assumption of Clam 1 hold. To this end, for every $u\in V_U$ and every $s_i\in V_S$ we put
\[t(us_i) = \bigcup_{j=1}^{m^B}\{1+j\cdot 2^A-f\ :\ f\in F_{i,j}\}.\]

In other words, for every possible color $1+j\cdot 2^A$ of $u$ we forbid all distances between $u$ and $s_i$ that would result in coloring $s_i$ with $F_{i,j}$. Then indeed the assumption from Claim 1 holds.

{\bf Claim 2} For any proper assignment $c:V\rightarrow[s]$ we have  $\Ss'(c)\subseteq \Ss(c)$.

{\em Proof of the claim:} We need to verify the assumption in Claim 1. Assume for the contradiction that for some $i$ and some $u\in V_U$ we have $c(s_i)\in F_{i,c(u)}$.
Recall that in a proper assignment $c(u)=1+j\cdot 2^A$ for some $j=1,\ldots,m^B$.
Then $|c(u)-c(s_i)|=1+j\cdot 2^A-c(s_i)\in t(us_i)$, a contradiction.

{\bf Claim 3} For any proper assignment $c:V\rightarrow[s]$ we have  $\Ss(c)$ covers the universe.

{\em Proof of the claim:} This is an immediate corollary from Claim 2 and the fact that every vertex $u\in V_U$ is colored with a color from $A(u)$.

{\bf Claim 4} For every cover $\Cc\subseteq\Ss$ of the universe, there is a proper assignment $c:V\rightarrow[s]$ such that $\Ss(c)=\Cc$.

{\em Proof of the claim:} We color $v_L$ and $v_R$ with $1$ and $s$, and every vertex $s_i$ with the color from $[2^A]$ corresponding to the subset $\Ss_i\cap \Cc$ of $\Ss_i$. 
For every set $U_i$ for every $e\in U_i$ we pick a set $S_e\in \Cc$ that contains $e$ and we build a witness $\Ww$ from the sets $S_e$. We color $u_i$ with the color $1+j\cdot 2^A$, where $j$ is the number of $\Ww$ in the lexicographic order of all witnesses. It remains to check that the resulting assignment $c$ is proper. The only nontrivial issue is whether for every $u\in V_U$ and $s_i\in V_S$ we have $|c(u)-c(s_i)|\not\in t(us_i)$. It is clear that $|c(u)-c(s_i)|\not\in\{c(u)-f\ :\ f\in F_{i,j}\}$, where $j$ is such that $c(u)=1+j\cdot 2^A$. However, for every $j'\ne j$ the set $\{1+j'\cdot 2^A-f\ :\ f\in F_{i,j'}\}$ is disjoint from $2^A$ (this is where we make use of the `gaps' of length $2^A-1$). 

\heading{Bounding the number of sets chosen to the solution}
The last thing we need in a proper assignment $c$ is to keep the number of the sets in $\Ss(c)$ bounded by $k$. To  this end we use the parsimonious module with the vertex set $V_P$.

The third parsimonious module consists of $\left\lceil \frac{m}{A} \right\rceil$
consecutive submodules and an additional free space of length $2k$ (meaning that for every $v\in V_P$ the set of allowed colors $A(v)$ contains this free space.
Between those submodules and the additional free space we put a gap of length
$2^A$, where no vertex can be assigned.
The intuition is that in a proper assignment $c$ the $i$-th submodule represents the number of sets from $\Ss_i$ chosen to the solution, i.e.\ $|\Ss_i(c_i)|$.

\begin{figure}[t]
  \centering
  \def\svgwidth{3in}
  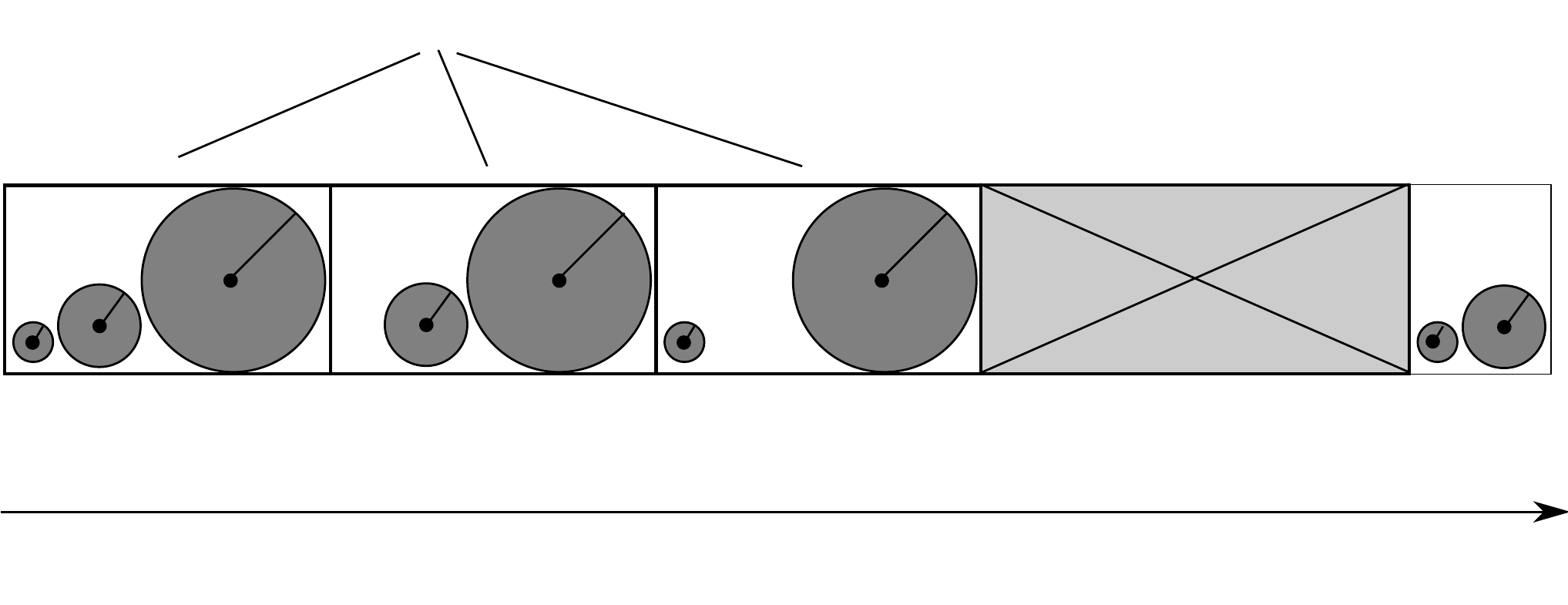
  \caption{The parsimonious module.\label{fig:disks}}
  \vspace{-0.5cm}
\end{figure}

More precisely, $V_P=\biguplus_{i=1}^{\ceil{m/A}} V_i$, where $V_i$ is a set of $1 + \left\lfloor \log A \right\rfloor$ vertices
representing numbers $2^0, 2^1, \ldots, 2^{\left\lfloor \log A \right\rfloor}.$
Let For a vertex $x\in V_P$ let $r(x)$ denote the number represented by $x$.
For every two vertices $x,y\in V_P$ we define 
\[t(xy)=\{0,\ldots,r(x)+r(y)-1\}.\]
It follows that we can interpret those vertices as disjoint disks with radii equal to
the represented numbers (see Fig.~\ref{fig:disks}).
Let $q=(1+m^B)2^A$, i.e.\ $q$ is the number of colors used by the first two modules. 
For every $i$, we define $i$-th slot as the set of colors $\{q+1+(i-1)\cdot 4A,\ldots,q+i\cdot 4A\}$. Note that the length of each slot is $4A$. Define also the free space as $Q=\{q+\ceil{m/A}\cdot 4A+2^A+1,\ldots,q+\ceil{m/A}\cdot 4A+2^A+2k\}$. 
Each vertex $x\in V_i$ is either in $i$-th slot or in the free space $Q$.
However, $x$ has exactly one allowed color in the $i$-th slot chosen so that we can put all the disks in the $i$-th slot and they will be disjoint. 
Let $j$ be such that $r(x)=2^j$. Then we denote the allowed color by $a_x=q+(i-1)\cdot 4A+\sum_{r<j}2\cdot 2^r+2^j$.
In precise terms, $A(x)=\{a_x\}\cup Q$.

\begin{figure}[h!]
  \centering
  \def\svgwidth{3in}
  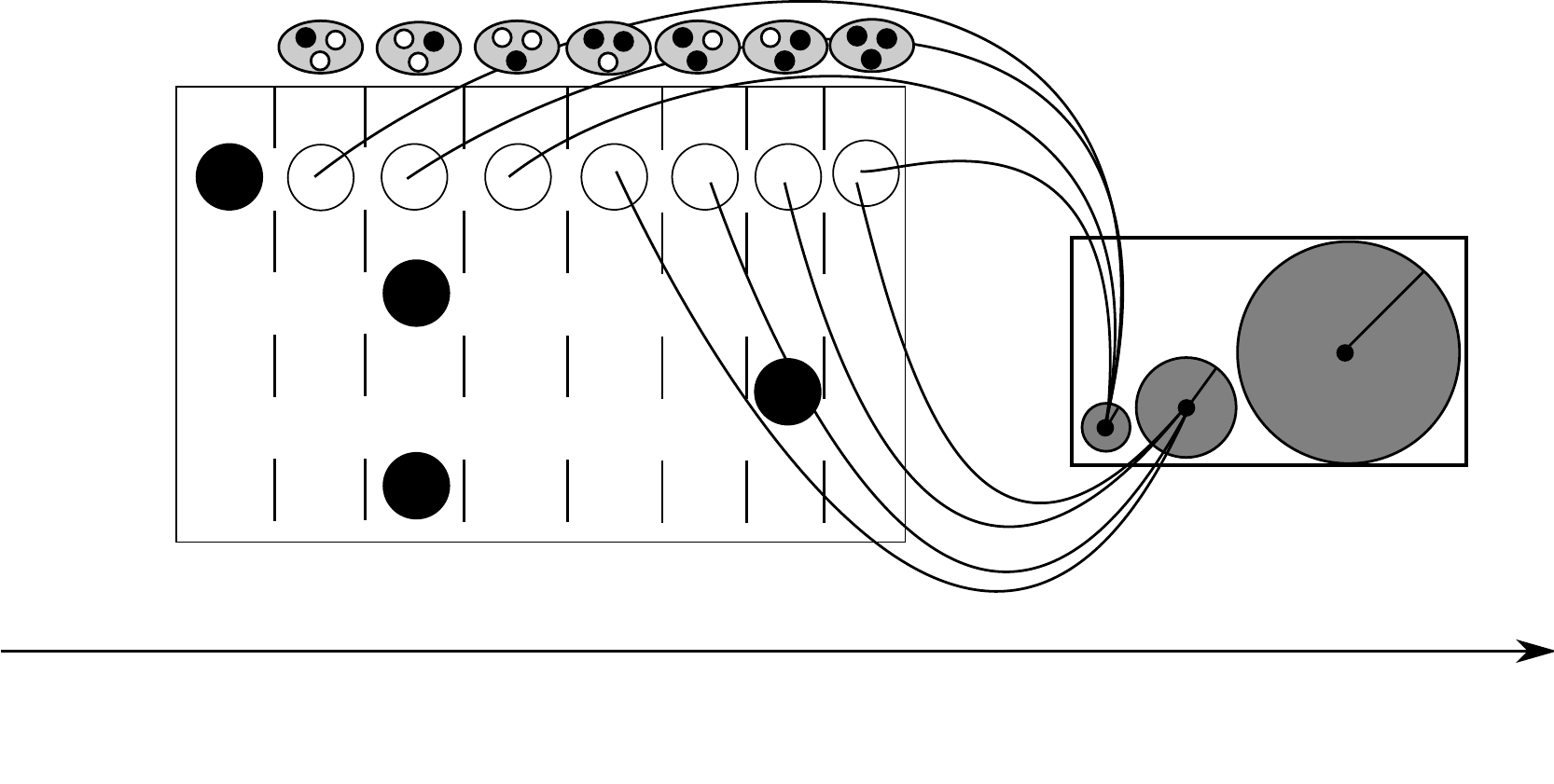
  \caption{
    The interaction between the set choice module and one of
    the submodules of the parsimonious module.
    Note that colors in $[2^A]$ are ordered according to the cardinality of the chosen
    collection of sets ($0, 1, 1, 1, 2, 2, 2, 3$).  
  }
  \vspace{-0.5cm}
\end{figure}

Vertices of the $i$-th submodule have some edges to the vertex $s_i$ of the
set choice module. As we mentioned, for a proper assignment $c$ the $i$-th submodule is going to be a counter representing the number of sets in $\Ss_i(c)$; in fact the vertex representing $2^j$ corresponds to the $j$-th bit of the counter.
So if $r(x)=2^j$ for $x\in V_i$, then $t(s_ix)$ contains all distances $d$ such that
$a_x-d$ is a color $b$ from $2^A$ such that the $j$-th bit of $|\Ss_i(b)|$ is 1.
Hence, in a proper assignment $c$, if the $j$-th bit of the number of sets in $\Ss_i(c)$ is 1 then $x$ is thrown away from the $i$-th slot and it is colored by a color from the free space $Q$.
However, the $|Q|=2k$ so the sum of the radii of the disks thrown out from its slots is at most $k.$ It follows that the total number of the chosen sets is also at most $k.$
Also, if there is a cover $\Cc\subseteq\Ss$ of the universe such that $|\Cc|\le k$, then for every $i$, if $|\Cc\cap\Ss_i|$ has 1 on the $j$-th bit we put the vertex of $V_i$ representing $2^j$ in $Q$. It is clear that since $|\Cc|\le k$ we have enough space for them in $Q$. Moreover, we do not violate any edge between these vertices and $V_S$ because of the gap $2^A$ inside the parsimonious module. Together with Claim 4 it implies that $(\Ss,k)$ is a YES-instance of \SetCover iff $(G,t,s)$ is a YES-instance of \TCol, provided that $s$ is sufficiently large to provide disjoint intervals of colors for all the modules. From the construction we infer that it is sufficient to put $s = 2^A + 2^A \cdot m^B + 4A \cdot \left\lceil\frac{m}{A}\right\rceil + 2^A + 2k.$

\heading{Calculating the parameters}
Note that $s=O(2^Am^B)$ and in particular our instance is $O(2^Am^B)$-bounded.
Moreover, $|V|=
  \left\lceil\frac{n}{B}\right\rceil +
  \left\lceil\frac{m}{A}\right\rceil +
  \left\lceil\frac{m}{A}\right\rceil \cdot
    \left(1 + \left\lfloor\log A\right\rfloor \right)
  + 2
= O\left(\frac{n}{B} + \frac{m}{A} \cdot \left(1 + \log A\right) \right)
= O\left(\frac{n}{B} + \frac{m}{A} \cdot \max\left\{1, \log A\right\} \right)$.
Finally, the total number of constraints is bounded by
$
O^*(
 (\frac{n}{B} + \frac{m}{A} \cdot \max\left\{1, \log A\right\} )^2
 \cdot (2^A \cdot m^B)
 )
= O^*\left(2^A \cdot m^B\right)$, i.e., 
the number of pairs of the vertices times the maximum forbidden distance $s-1$.
It ends the proof.
\ignore{
To denote every forbidden distance as well as the searched span in the query
it is sufficient to use  $
\log\left(\left(( 2^A\cdot m^B\right)\cdot {\rm poly}(A, B, m,n) \right)
= A + B\log m + \log\left({\rm poly}(A, B, m, n)\right)
= O\left(A + B\log m + \log n\right) $
bits.
To denote the number of the vertices in the instance as well as to denote
the numbers of specific vertices it is sufficient to use
$O\left(\log(n+m)\right)$ bits.
So clearly to denote every number in our instance of \TCol
it suffices to use $O\left(A + B\log m + \log n\right)$ bits.

The last thing is to deal with the real values of $A$ and $B.$
We can perform all the above construction for integer numbers $\lfloor A\rfloor$
and $\lfloor B\rfloor$ instead of the real $A$ and $B.$
Then we obtain an instance with $
O\left(
  \frac{n}{\lfloor B\rfloor} +
  \frac{m}{\lfloor A\rfloor} \cdot \max\left\{1, \log \lfloor A\rfloor\right\}
  \right)
= O\left(
  \frac{n}{B} +
  \frac{m}{A} \cdot \max\left\{1, \log A\right\}
  \right)
$ vertices (because if $x \geq 1$ then
$\frac{1}{\lfloor x\rfloor} \leq \frac{2}{x}$),
with the overall number of the forbidden distances $
O^*\left(2^{\lfloor A\rfloor} \cdot m^{\lfloor B\rfloor}\right)
= O^*\left(2^A \cdot m^B\right)$
and with all the numbers in the instance with
$O\left(\lfloor A\rfloor + \lfloor B\rfloor\log m + \log n\right)
= O\left(A + B\log m + \log n\right)$
bits. 
}
\end{proof}

\begin{corollary}
\label{cor:DS-to-TC}
Let $(G, k)$ be an instance of \DomSet where $G$
is a graph on $n$ vertices and $k\in\mathbb{N}$.
Then, for any real number $A \in [1,n]$
we can generate in polynomial time an instance of \TCol with
$O\left(\frac{n}{A} \cdot \max\{1, \log A\}\right)$ vertices
and with $O^*\left((2n)^A\right)$ constraints
and such that all the numbers in the instance have
$O\left(A \cdot \max\left\{1, \log n\right\}\right)$ bits.
\end{corollary}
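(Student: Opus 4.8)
The plan is to obtain the corollary by composing the textbook reduction from \DomSet to \SetCover with Lemma~\ref{th:SC-to-TC}, choosing the free parameter $B$ of that lemma equal to the given~$A$.

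First I would recall the standard reduction: given $(G,k)$ with $G=(V,E)$ and $|V|=n$, build a \SetCover instance whose universe is $U=V$ and whose family is $\Ss=\{N[v]\ :\ v\in V\}$, the closed neighborhoods of the vertices of $G$. A subfamily of $k$ closed neighborhoods covers $U$ if and only if the corresponding $k$ vertices form a dominating set, so $(G,k)$ is a positive \DomSet instance iff $(\Ss,k)$ is a positive \SetCover instance. This transformation runs in polynomial time and yields a \SetCover instance with universe of size exactly $n$ and with $m=|\Ss|=n$ sets.

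Next I would feed $(\Ss,k)$ into Lemma~\ref{th:SC-to-TC} with the parameter $B$ set to $A$; note that after the reduction $m=n$, so the lemma's hypothesis $A\in[1,m]$ becomes $A\in[1,n]$ and $B=A\in[1,n]$, both consistent with the corollary. Lemma~\ref{th:SC-to-TC} produces in polynomial time an equivalent \TCol instance, so equivalence with the original \DomSet instance follows by transitivity. It then remains to substitute $m=n$ and $B=A$ into the bounds of the lemma. The vertex count becomes $O\!\left(\tfrac{n}{A}+\tfrac{n}{A}\max\{1,\log A\}\right)=O\!\left(\tfrac{n}{A}\max\{1,\log A\}\right)$, and the number of constraints becomes $O^*(2^A\cdot n^A)=O^*((2n)^A)$, matching the claimed parameters.

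The one point that needs genuine checking — and which I expect to be the main (though mild) obstacle — is the bound on the number of bits used by the numbers in the instance, since this is not stated explicitly in Lemma~\ref{th:SC-to-TC}. Every number appearing in the constructed instance (allowed colors, forbidden distances, and the target span) is bounded by the span $s=O(2^A m^B)$, so each can be encoded in $O(\log s)=O(A+B\log m)$ bits, while vertex indices contribute only $O(\log(n+m))$ bits and are dominated. Substituting $m=n$ and $B=A$ gives $O(A+A\log n+\log n)=O(A\cdot\max\{1,\log n\})$ bits, as required.
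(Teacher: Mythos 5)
Your proposal is correct and follows essentially the same route as the paper: the standard closed-neighborhood reduction from \DomSet to \SetCover (giving $m=n$ and universe size $n$), followed by an application of Lemma~\ref{th:SC-to-TC} with $B=A$, plus the observation that every number in the constructed instance is bounded by the span $s=O(2^A m^B)$ and hence fits in $O(A+B\log m)=O(A\cdot\max\{1,\log n\})$ bits. Your explicit verification of the bit-size bound is a welcome addition, since the paper's one-line proof leaves it implicit.
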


\begin{proof}
The instance of \textsc{Dominating Set} with $n$ vertices can be transformed to an equivalent instance of \textsc{Set Cover} with $n$ sets and also $n$ elements of
the universe in a standard way (the sets are exactly the neighborhoods of the vertices).
The number $k$ stays the same.
Therefore we can use the Lemma~\ref{th:SC-to-TC} with $A=B$ and $m=n.$
\end{proof}

\begin{theorem}
\label{th:DS-to-TC-complex}
If there exists an algorithm solving \TCol
in one of the following time complexities:
\begin{enumerate}[$(i)$]
  \item $2^{2^{o\left(\sqrt{n}\right)}} {\rm poly}(r),$
  \item $2^{n \cdot o\left({\log l}/{(\log\log l)^2}\right)} {\rm poly}(r),$
\end{enumerate}
where $n$ is the number of vertices in the input graph and $r$ is the bit size of the input, then there exists an algorithm solving \textsc{Dominating Set}
in time $2^{o(n)}.$
\end{theorem}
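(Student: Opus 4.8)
The plan is to prove the implication directly: assuming a \TCol algorithm $\mathcal{B}$ whose running time is of type $(i)$ or $(ii)$, I would build a $2^{o(n)}$-time algorithm for \DomSet. Given a \DomSet instance $(G,k)$ on $n$ vertices, I would invoke Corollary~\ref{cor:DS-to-TC} with a parameter $A=A(n)$ to be fixed later, obtaining in polynomial time an equivalent \TCol instance $I$ with $N=O(\frac{n}{A}\max\{1,\log A\})$ vertices, which is $\ell$-bounded with $\ell=O^*((2n)^A)$ (so $\log\ell=O(A\log n)$), and whose total bit size is $r=2^{O(A\log n)}$ (the instance is dominated by its $O^*((2n)^A)$ constraints, each storing a number of $O(A\log n)$ bits). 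Running $\mathcal{B}$ on $I$ decides $(G,k)$, so the whole task is to choose $A$ so that the running time of $\mathcal{B}$ on $I$ is $2^{o(n)}$. Since $\log(\mathrm{poly}(r))=O(A\log n)$, the factor $\mathrm{poly}(r)$ is $2^{o(n)}$ exactly when $A=o(n/\log n)$; this constraint must be respected in both cases.

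For case $(ii)$ I would simply take $A=\sqrt{n}$. Then $N=\Theta(\sqrt{n}\log n)$, $\log\ell=\Theta(\sqrt{n}\log n)$ and $\log\log\ell=\Theta(\log n)$, hence $\frac{\log\ell}{(\log\log\ell)^2}=\Theta(\frac{\sqrt n}{\log n})$ and $N\cdot\frac{\log\ell}{(\log\log\ell)^2}=\Theta(n)$. Writing the assumed exponent as $N\cdot\epsilon'(\ell)\frac{\log\ell}{(\log\log\ell)^2}$ with $\epsilon'(\ell)\to 0$, it equals $\epsilon'(\ell)\cdot\Theta(n)=o(n)$ because $\ell=\ell(n)\to\infty$; meanwhile $\mathrm{poly}(r)=2^{O(\sqrt n\log n)}=2^{o(n)}$. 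Thus the total time is $2^{o(n)}$. (In fact any $A=n^{c}$ with $0<c<1$ works, since $N\cdot\frac{\log\ell}{(\log\log\ell)^2}$ stays $\Theta(n)$ while $A\log n=o(n)$.)

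Case $(i)$ is the delicate one, and I expect it to be the main obstacle. Here the running time is $2^{2^{g(N)}}\mathrm{poly}(r)$ with $g(N)=o(\sqrt N)$, and $2^{2^{g(N)}}=2^{o(n)}$ requires $2^{g(N)}=o(n)$, for which it suffices that $g(N)=o(\log n)$. Because $g(N)=o(\sqrt N)$, this forces $N$ down toward $\log^2 n$, which in turn drives $A$ up toward $n/\log n$ — in direct tension with the requirement $A=o(n/\log n)$ coming from $\mathrm{poly}(r)$. The resolution is to tailor $A$ to the actual decay of the saving: write $g(m)\le\epsilon(m)\sqrt m$ with a non-increasing $\epsilon(m)\to 0$, and choose $A$ just below $n/\log n$ so that $N=\omega(\log^2 n)$, but only by a factor governed by $\epsilon$; concretely one may aim for $N\approx\log^2 n/\sqrt{\epsilon(\lceil\log^2 n\rceil)}$. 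Then $\epsilon(N)\le\epsilon(\lceil\log^2 n\rceil)$, so $g(N)\le\epsilon(N)\sqrt N=o(\log n)$, giving $2^{2^{g(N)}}=2^{n^{o(1)}}=2^{o(n)}$; and $N=\omega(\log^2 n)$ yields $A=o(n/\log n)$, so $\mathrm{poly}(r)=2^{o(n)}$ as well. When $\epsilon$ decays so fast that this target $N$ would exceed $n$, the constraint is slack and even a constant $A$ (whence $N=\Theta(n)$ and $g(N)\le\epsilon(\Theta(n))\sqrt{\Theta(n)}=o(\log n)$) suffices. In every case the total time is $2^{o(n)}$.

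The only loose end is that the choice of $A$ in case $(i)$ depends on the (possibly non-computable) rate $\epsilon$ of the assumed algorithm; this non-uniformity is handled as usual in ETH-style arguments, since the statement only asserts the \emph{existence} of a $2^{o(n)}$-time \DomSet algorithm, and one may replace $\epsilon$ by any computable non-increasing upper bound that still tends to $0$. Collecting the two cases establishes the theorem.
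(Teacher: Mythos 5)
Your proposal is correct and follows essentially the same route as the paper: reduce from \DomSet via Corollary~\ref{cor:DS-to-TC} and tune the parameter $A$ to the case at hand, with the key idea in case $(i)$ being to let $A$ depend on the decay rate of the little-$o$ bound so that the number of vertices of the produced instance sits only slightly above $\log^2 n$, balancing $2^{2^{g(N)}}$ against ${\rm poly}(r)$. The paper makes exactly this move by fixing a nondecreasing $\alpha$ with $\alpha(n)\le \sqrt{n}/f(Cn)$ and setting $A=n/(\alpha(\log^2 n)\log n)$ (and uses $A=n/\log^2 n$ instead of your $\sqrt{n}$ in case $(ii)$), so the two arguments differ only in bookkeeping.
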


\begin{proof} We begin with proving $(i)$.
Let us assume that we have an algorithm solving \TCol
    in time $2^{2^{f(n)}} {\rm poly}(r)$
    where $f$ is some function such that $f(n) = o\left(\sqrt{n}\right).$
    We can assume without loss of generality that $f$ is positive and
    nondecreasing.
    Let $C$ be a constant such that Corollary \ref{cor:DS-to-TC} will give us
    always at most $C \cdot \frac{n}{A} \cdot \max\left\{1, \log A\right\}$
    vertices.
    Let $\alpha$ be a positive nondecreasing function such that
    $\alpha(n) \leq \frac{\sqrt{n}}{f(Cn)}$ and $\alpha(n) = \omega(1).$
    Such a function always exists because
    $\frac{\sqrt{n}}{f(Cn)} = \frac{1}{\sqrt{C}} \cdot \frac{\sqrt{Cn}}{f(Cn)}
      = \omega(1).$
    For every instance of \DomSet with $n$ vertices we can
    take $A = \frac{n}{\alpha\left(\log^2 n\right) \log n}$ and use
    Corollary \ref{cor:DS-to-TC} to obtain an instance of \TCol
    with $O\left(\frac{n}{A} \log A\right)
      = O\left(\alpha\left(\log^2 n\right) \log^2 n\right)$
    vertices and
    \[O^*\left((2n)^A\right) = O^*\left(2^{A + A\log n}\right)
      = O^*\left(2^{O\left({n}/{\left(\alpha\log^2 n\right)}\right)}\right)
      = 2^{o(n)}\]
    constraints. Moreover the numbers in the instance have polynomial size,
    so the size of the whole instance is $2^{o(n)}.$
    Thus this instance can be built in
    ${\rm poly}\left(n, 2^{o(n)}\right) = 2^{o(n)}$ time.
    Then we can solve this instance in
    $2^{2^{f\left(C \cdot \frac{n}{A}\log A\right)}}
      {\rm poly}\left(2^{o(n)}\right)$
    time. But
    $f\left(C \cdot \frac{n}{A} \log A\right)
      \leq f\left(C \cdot \alpha\left(\log^2 n\right) \log^2 n\right)
      \leq \frac{\sqrt{\alpha\left(\log^2 n\right) \log^2 n}}{
          \alpha\left(\alpha\left(\log^2 n\right) \log^2 n\right)}
      = $ \linebreak $\log n \cdot \frac{\sqrt{\alpha\left(\log^2 n\right)}}{
          \alpha\left(\alpha\left(\log^2 n\right) \log^2 n\right)}
      = \log n \cdot \frac{\sqrt{\alpha\left(\log^2 n\right)}}{
            \alpha\left(\log^2 n\right)}
          \cdot \frac{\alpha\left(\log^2 n\right)}{
             \alpha\left(\alpha\left(\log^2 n\right) \log^2 n\right)}
      \leq \frac{\log n}{\sqrt{\alpha\left(\log^2 n\right)}}
      = o\left(\log n \right)$.
    So the time of the whole procedure is
    $2^{o(n)} + 2^{2^{o\left(\log n\right)}}{\rm poly}\left(2^{o(n)}\right)
      = 2^{o(n)}.$

  Now we focus on $(ii)$. Let us assume we have an algorithm solving \TCol in time
    $2^{n \cdot f(l)} {\rm poly}(m)$ where $f$ is a positive function such that
    $f(l) = o\left(\frac{\log l}{\log^2\log l}\right).$
    Let $A = \frac{n}{\log^2 n}.$
    For every instance of \DomSet we can use
    Corollary \ref{cor:DS-to-TC} to obtain an instance of
    \TCol with
    $O\left(\log^2 n \cdot \log \frac{n}{\log^2 n}\right)
      = O\left(\log^3 n\right)$ vertices,
    $O^*\left((2n)^\frac{n}{\log^2 n}\right) 
      = 2^{O\left(\frac{n}{\log n}\right)}
      = 2^{o(n)}$ constraints and
    every number with $O\left(\frac{n}{\log n}\right)$ bits.
    We can obtain it in ${\rm poly}\left(n, 2^{o(n)}\right) = 2^{o(n)}$ time. 
    Note that then $\log l \leq C \frac{n}{\log n}$ for some constant $C.$
    The function ${x}/{\log^2 x}$ is nondecreasing for big values of
    $x$ so for big values of $n$ we have ${\log l}/{\log^2\log l}
      \leq {C \frac{n}{\log n}}/{\log^2 \left(C \frac{n}{\log n}\right)}.$
    So we can solve our instance of \TCol in time
    \begin{equation*}
    \begin{split}
& 2^{O\left(\log^3 n\right) \cdot
      o\left({C \frac{n}{\log n}}/{\log^2\left(C\frac{n}{\log n}\right)}
        \right)} \cdot {\rm poly}\left(2^{o(n)}\right)
     = 2^{o\left({n\log^2 n}/{\log^2\left(C\frac{n}{\log n}\right)}\right)}
       \cdot 2^{o(n)}
     = \\ 
     & 2^{o\left({n\log^2 n}/{\left(\log C + \log n - \log\log n\right)^2}
       \right)} \cdot 2^{o(n)} 
      = 2^{o\left({n}/{\left(\frac{\log C}{\log n} + 1
       - \frac{\log\log n}{\log n}\right)^2}\right)} \cdot 2^{o(n)} = \\
     & 2^{o(n)} \cdot 2^{o(n)}
     = 2^{o(n)}.     
    \end{split}
    \end{equation*}
     So we have solved the given instance of \textsc{Dominating Set} in time
     $2^{o(n)} + 2^{o(n)} = 2^{o(n)}.$
      
\end{proof}

\begin{corollary}
\label{cor:TC-ETH}
There is no algorithm solving an $n$-vertex instance of \TCol
with bit size $r$ in any of the listed time complexities
\begin{itemize}
  \item $2^{2^{o\left(\sqrt{n}\right)}} {\rm poly}(r),$ 
  \item $2^{n \cdot o\left({\log l}/{\log^2\log l}\right)} {\rm poly}(r),$
\end{itemize}
unless the Exponential Time Hypothesis fails.
\end{corollary}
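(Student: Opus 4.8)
The plan is to derive Corollary~\ref{cor:TC-ETH} as a direct contrapositive of Theorem~\ref{th:DS-to-TC-complex}, combined with the standard fact that, under ETH, \DomSet admits no $2^{o(n)}$-time algorithm. Concretely, I would argue by contradiction. Suppose some algorithm solves \TCol within one of the two listed running times --- either $2^{2^{o(\sqrt{n})}}{\rm poly}(r)$ or $2^{n\cdot o(\log l/\log^2\log l)}{\rm poly}(r)$. Observe first that $\log^2\log l$ and $(\log\log l)^2$ denote the same quantity, so the second bound matches hypothesis $(ii)$ of the theorem verbatim, while the first matches hypothesis $(i)$. Then Theorem~\ref{th:DS-to-TC-complex} applies immediately and yields an algorithm that decides \DomSet on $n$-vertex graphs in time $2^{o(n)}$.

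The second ingredient is the ETH lower bound for \DomSet: assuming ETH, there is no $2^{o(n)}$-time algorithm for \DomSet on $n$-vertex graphs. I would obtain this through the usual chain. ETH~\cite{IP01} asserts that \ThreeSAT has no $2^{o(N)}$-time algorithm in the number $N$ of variables; the Sparsification Lemma strengthens this to the nonexistence of a $2^{o(N+M)}$-time algorithm, where $M$ is the number of clauses. A standard reduction from \ThreeSAT then produces a graph whose number of vertices is linear in $N+M$ and which encodes satisfiability as a domination (equivalently, set-cover) question; hence a $2^{o(n)}$-time algorithm for \DomSet would give a $2^{o(N+M)}$-time algorithm for \ThreeSAT, contradicting ETH.

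Putting the two together completes the argument: the hypothesized \TCol algorithm would, via Theorem~\ref{th:DS-to-TC-complex}, solve \DomSet in $2^{o(n)}$ time, which is impossible unless ETH fails. Taking the contrapositive, if ETH holds then neither of the two running times is achievable, which is exactly the statement of the corollary.

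The main obstacle here is not conceptual but a matter of bookkeeping: I must make sure that the $o(\cdot)$ expressions and the parameters referenced in the corollary line up exactly with the hypotheses of Theorem~\ref{th:DS-to-TC-complex} --- in particular that $n$ denotes the number of vertices of the produced \TCol instance, $r$ its total bit size, and $l$ the bound on the forbidden sets, and that the ``unless ETH fails'' phrasing is realized by contraposing an \emph{implication} rather than a two-sided equivalence. The only genuinely external input is the $2^{o(n)}$ ETH-hardness of \DomSet, and the sole point requiring care there is to confirm that the reduction from \ThreeSAT is vertex-linear, so that $2^{o(n)}$ for \DomSet indeed translates into $2^{o(N+M)}$ for \ThreeSAT; this is classical, so I expect no difficulty beyond invoking it correctly.
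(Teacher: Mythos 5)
Your proposal is correct and follows essentially the same route as the paper: contrapose Theorem~\ref{th:DS-to-TC-complex} and combine it with the ETH-based $2^{o(n)}$ lower bound for \DomSet (which the paper simply cites from the literature rather than re-deriving via sparsification as you sketch). The bookkeeping points you flag (matching the $o(\cdot)$ expressions and noting that $\log^2\log l$ and $(\log\log l)^2$ coincide) are handled implicitly in the paper and pose no issue.
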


\begin{proof}
Under the ETH assumption there is no algorithm solving \textsc{Dominating Set}
in time $2^{o(n)}$ where $n$ is a number of the vertices (See~\cite{FominKW04}).
Therefore the claim follows immediately from Theorem \ref{th:DS-to-TC-complex}.
\end{proof}

Regarding the first claim the theorem above, we note that there is a $2^{O(n\log l)} {\rm poly}(r)$-time algorithm for \TCol, see~\cite{junosza-tcol}.

\bibliographystyle{abbrv}
\bibliography{channel}

\begin{thebibliography}{10}

\bibitem{bhk:coloring}
A.~Bj{\"o}rklund, T.~Husfeldt, and M.~Koivisto.
\newblock Set partitioning via inclusion-exclusion.
\newblock {\em SIAM J. Comput.}, 39(2):546--563, 2009.

\bibitem{cygan}
M.~Cygan and L.~Kowalik.
\newblock Channel assignment via fast zeta transform.
\newblock {\em Inf. Process. Lett.}, 111(15):727--730, 2011.

\bibitem{FominKW04}
F.~V. Fomin, D.~Kratsch, and G.~J. Woeginger.
\newblock Exact (exponential) algorithms for the dominating set problem.
\newblock In {\em Proc. WG'04}, volume 3353 of {\em Lecture Notes in Computer
  Science}, pages 245--256, 2004.

\bibitem{Hale80}
W.~Hale.
\newblock Frequency assignment: Theory and applications.
\newblock {\em Proceedings of the IEEE}, 68(12):1497--1514, Dec 1980.

\bibitem{HS-JACM74}
E.~Horowitz and S.~Sahni.
\newblock Computing partitions with applications to the knapsack problem.
\newblock {\em J. ACM}, 21(2):277--292, 1974.

\bibitem{dagstuhl}
T.~Husfeldt, R.~Paturi, G.~B. Sorkin, and R.~Williams.
\newblock {Exponential Algorithms: Algorithms and Complexity Beyond Polynomial
  Time (Dagstuhl Seminar 13331)}.
\newblock {\em Dagstuhl Reports}, 3(8):40--72, 2013.

\bibitem{IP01}
R.~Impagliazzo and R.~Paturi.
\newblock On the complexity of k-sat.
\newblock {\em J. Comput. Syst. Sci.}, 62(2):367--375, 2001.

\bibitem{junosza-tcol}
K.~Junosza-Szaniawski and P.~Rzążewski.
\newblock An exact algorithm for the generalized list {$T$}-coloring problem.
\newblock {\em CoRR}, abs/1311.0603, 2013.

\bibitem{kral}
D.~Kr{\'a}l.
\newblock An exact algorithm for the channel assignment problem.
\newblock {\em Discrete Applied Mathematics}, 145(2):326--331, 2005.

\bibitem{mcdiarmid}
C.~J.~H. McDiarmid.
\newblock On the span in channel assignment problems: bounds, computing and
  counting.
\newblock {\em Discrete Mathematics}, 266(1-3):387--397, 2003.

\bibitem{traxler}
P.~Traxler.
\newblock The time complexity of constraint satisfaction.
\newblock In M.~Grohe and R.~Niedermeier, editors, {\em IWPEC}, volume 5018 of
  {\em Lecture Notes in Computer Science}, pages 190--201. Springer, 2008.

\end{thebibliography}

\end{document}